\useunder{\uline}{\ul}{}
\pgfplotsset{compat=1.17} 
\setlist[itemize]{noitemsep, topsep=2pt}
\DeclareMathOperator*{\argmin}{arg\,min} 
\pgfplotsset{compat=1.16}  
\definecolor{c1}{RGB}{42,99,172} %
\definecolor{c2}{RGB}{255,88,93}
\definecolor{c3}{RGB}{255,181,73}
\definecolor{c4}{RGB}{119,71,64} %
\definecolor{c5}{RGB}{228,123,121} %
\definecolor{c6}{RGB}{208,167,39} %
\definecolor{c7}{RGB}{0,51,153}
\definecolor{c8}{RGB}{56,140,139} 
\definecolor{c9}{RGB}{0,0,0} 
\definecolor{color1}{HTML}{FFF4D8}
\definecolor{color2}{HTML}{FF7163}
\definecolor{Blood}{HTML}{860309}
\definecolor{Olive}{HTML}{807805}
\definecolor{Topaz}{HTML}{F5C678}
\definecolor{American Yellow}{HTML}{F3AA07}
\definecolor{Giants Orange}{HTML}{FF5C1E}
\definecolor{Persian Plum}{HTML}{6C1D2A}
\definecolor{Pearl Aqua}{HTML}{85D6B2}
\definecolor{Moonstone}{HTML}{3F9EBD}
\definecolor{Blue Jeans}{HTML}{5BBCF0}
\definecolor{St. Patrick's Blue}{HTML}{2B2D7C}
\definecolor{Green}{HTML}{159879}
\definecolor{Green2}{HTML}{31D683}
\definecolor{Purple}{HTML}{C1689C}
\newcommand{\tbase}{TierBase }
\newcommand{\sstitle}[1]{\vspace{0ex} \noindent\underline{{\it #1}}}
\theoremstyle{definition}
\newtheorem{definition}{Definition}
\def\BibTeX{{\rm B\kern-.05em{\sc i\kern-.025em b}\kern-.08em
    T\kern-.1667em\lower.7ex\hbox{E}\kern-.125emX}}
\begin{document}
\title{ TierBase: A Workload-Driven Cost-Optimized Key-Value Store [Extended Version] }


\author{
Zhitao Shen$^{\dagger}$, Shiyu Yang$^{\S}$, Weibo Chen$^{\S \dagger }$,  Kunming Wang$^{\S \dagger}$, Yue Li$^{\dagger}$, Jiabao Jin$^{\dagger}$, Wei Jia$^{\dagger}$, Junwei Chen$^{\dagger}$,  Yuan Su$^{\dagger}$, Xiaoxia Duan$^{\dagger}$, Wei Chen$^{\dagger}$, Lei Wang$^{\dagger}$, Jie Song$^{\dagger}$, Ruoyi Ruan$^{\dagger}$, Xuemin Lin$^{\ddagger}$ \\
}

\renewcommand{\shortauthors}{Shen et al.}
\affiliation{$^\dagger$Ant Group ; ~ $^\S$Guangzhou University ; ~ $^{\ddagger}$ Shanghai Jiao Tong University}
\affiliation{ \{zhitao.szt,~ chenweibo.cwb,~ wangkunming.wkm,~ ly321766,~ jinjiabao.jjb, ~ jw94525,~ weisong.cjw,~ stevensu.sy,~ \\ 
xiaoxia.dxx,~ cw281808,~ wl177541,~ peter.sj,~ ruoyi.ruanry\}@antgroup.com; syyang@gzhu.edu.cn; xuemin.lin@sjtu.edu.cn}


\begin{abstract}

In the current era of data-intensive applications, the demand for high-performance, cost-effective storage solutions is paramount. This paper introduces a Space-Performance Cost Model for key-value store, designed to guide cost-effective storage configuration decisions. The model quantifies the trade-offs between performance and storage costs, providing a framework for optimizing resource allocation in large-scale data serving environments. Guided by this cost model, we present TierBase, a distributed key-value store developed by Ant Group that optimizes total cost by strategically synchronizing data between cache and storage tiers, maximizing resource utilization and effectively handling skewed workloads. To enhance cost-efficiency, TierBase incorporates several optimization techniques, including pre-trained data compression, elastic threading mechanisms, and the utilization of persistent memory.  We detail TierBase's architecture, key components, and the implementation of cost optimization strategies. Extensive evaluations using both synthetic benchmarks and real-world workloads demonstrate TierBase's superior cost-effectiveness compared to existing solutions. Furthermore, case studies from Ant Group's production environments showcase TierBase's ability to achieve up to 62\% cost reduction in primary scenarios, highlighting its practical impact in large-scale online data serving.
\end{abstract}

\maketitle

\begingroup
\renewcommand\thefootnote{}\footnote{\noindent
}\addtocounter{footnote}{-1}\endgroup


\section{Introduction}

In the current era of data-intensive applications, the demand for high-performance, cost-effective storage solutions is paramount. Key-value store, with their promise of scalability, flexibility, and rapid data access, have emerged as a pivotal component in this landscape. Renowned key-value stores such as Redis \cite{redis}, Memcached \cite{memcached}, Cassandra \cite{Cassandra}, and HBase \cite{hadoop} have set industry benchmarks, offering solutions designed for diverse use cases ranging from in-memory caching to persistent storage.

At Ant Group, we face numerous challenges in managing our online data serving systems. These challenges stem from the intrinsic nature of our services and the evolving demands of our vast user base. We handle \textbf{\textit{the immense volume of data}} generated by billions of users, necessitating robust, low-latency, and cost-effective storage solutions.  The diverse scenarios and applications lead to \textbf{\textit{a wide spectrum of workloads}} with varying needs for reliability, durability, and latency.  Additionally, \textbf{\textit{significant skewness in data access patterns}}, with dynamically changing hot spots, complicates data access and caching strategies. 
Furthermore, in order to handle these workloads, we have to maintain an extremely large number of machines, which leads to a \textbf{\textit{non-trivial configuration decision}} challenge. This challenge demands a cost model to accurately quantify the cost-performance trade-offs.

To address these challenges, we introduce TierBase, a distributed key-value store developed by Ant Group since 2017. Initially Redis-compatible, TierBase has evolved to support advanced functions such as CAS operations, wide-columns, and vector searching. Extensively utilized within Ant Group, TierBase maintains sub-millisecond access latency even under peak loads of hundreds of millions of queries per second (QPS), crucial for delivering seamless user experiences during events like the Double 11 shopping festival.

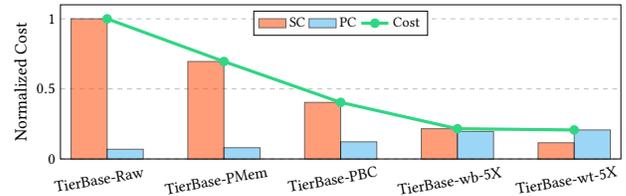
\begin{figure}[t]
    \centering
    \begin{tikzpicture}[scale=0.6]
        \begin{axis}[
            width=14cm, 
            height=5cm, 
            bar width=5cm, 
            ymin=0,
            ymax=1.1, 
            bar width=0.8cm, 
            ylabel={Normalized Cost}, 
            symbolic x coords={TierBase-Raw, TierBase-PMem, TierBase-PBC, TierBase-wb-5X, TierBase-wt-5X}, 
            xtick=data, 
            legend style={at={(0.5,0.8)}, anchor=south,legend columns=-1}, 
            ymajorgrids=true, 
            ylabel style={font=\Large}, 
            xlabel style={font=\huge},
            x tick label style={font=\large, rotate=10,  xshift=-5pt}, %
            y tick label style={font=\large}, 
            grid style=dashed,
            tick style={draw=none} 
        ]
            \addplot[ybar, fill=Giants Orange, bar shift=-0.4cm, area legend, opacity=0.6] coordinates {
                (TierBase-Raw, 1.0) (TierBase-PMem, 0.695421468) (TierBase-PBC, 0.403577926) (TierBase-wb-5X, 0.21573681) (TierBase-wt-5X, 0.11552456)};
            \addlegendentry{SC}
            
            \addplot[ybar, fill=Blue Jeans, bar shift=0.4cm, area legend, opacity=0.6] coordinates {
                (TierBase-Raw, 0.0693208) (TierBase-PMem, 0.080181928) (TierBase-PBC, 0.122255913) (TierBase-wb-5X, 0.196112796) (TierBase-wt-5X, 0.206943602)};
            \addlegendentry{PC}
            
            \addplot[Green2, line width=2pt, mark=*] coordinates {
                (TierBase-Raw, 1.0) (TierBase-PMem, 0.695421468) (TierBase-PBC, 0.403577926) (TierBase-wb-5X, 0.21573681) (TierBase-wt-5X, 0.206943602)};
            \addlegendentry{Cost}
        \end{axis}
    \end{tikzpicture}
    \caption{Cost comparison in TierBase}
    \label{totalcost}
\end{figure}

To optimize costs further, we developed several cost-saving strategies within TierBase. Figure \ref{totalcost} illustrates the cost-performance trade-offs and the impact of these optimization techniques in a real-world use case. The figure shows how enabling our cost-saving strategies leads to reductions in both space cost ($SC$) and performance cost ($PC$), which represent storage and query processing expenses, respectively. In our cost model, the overall cost is defined as the maximum of $SC$ and $PC$, represented by the green line in the figure. Despite the increase in $PC$, the significant reduction in $SC$ results in an overall cost decrease. Our pre-trained compression technique (TierBase-PBC) achieves up to a 62\% cost reduction over the baseline (TierBase-Raw), demonstrating substantial savings in one of our primary online serving scenarios.

In developing TierBase, we focus on two key questions:

\noindent \textbf{Q1: How can we develop a comprehensive cost model for large-scale online data serving systems that adapts to varying workloads?}
Developing a comprehensive cost model for complex online data serving systems requires identifying key metrics that accurately capture real-world costs across various workloads. Traditional models \cite{tco,10.5555/1287369.1287387} typically focus on overall system costs without considering specific workload characteristics. The challenge lies in developing a quantitative framework that accurately models cost-performance trade-offs between different storage configurations while incorporating workload-specific characteristics and capturing the non-linear relationship between system configuration and cost.

\noindent \textbf{Q2: How can we effectively evaluate and apply optimization techniques in key-value stores to balance performance and cost for specific workloads?}
The research and industry communities have developed a number of innovative techniques \cite{SlimCache,riak} to optimize performance and storage efficiency for key-value stores. However, they are often workload-specific, and there is no universal solution. The challenge lies in developing a unified framework to evaluate and apply these diverse optimization techniques across different workloads considering both performance metrics and overall system cost. This cost model aims to enable informed decisions about which techniques to apply in different scenarios, moving beyond one-size-fits-all solutions towards a workload-aware optimization strategy for key-value stores.

To answer these questions, in this paper, we introduce: 
\noindent\textbf{Space-Performance Cost Model}: A novel approach balances performance and space costs, proposing that optimal cost is achieved when these factors are equal. This model extends to tiered storage systems, incorporating cache ratios and miss ratios to determine cost-effectiveness. 

\noindent\textbf{TierBase}: Guided by this cost model, TierBase employs a tiered storage architecture that balances performance and storage space for cost-efficiency. It incorporates innovative features such as a flexible design for handling diverse workloads, pre-trained data compression techniques, elastic threading mechanisms, and utilization of persistent memory.

\noindent\textbf{Cost Optimization Framework}: We introduce a framework for evaluating and optimizing costs for key-value stores. This framework includes strategies for adapting to diverse workloads and provides guidelines for making cost-effective decisions in system configuration and resource allocation.

To sum up, the main contributions are as follows:
\begin{itemize}
\item We propose a comprehensive Space-Performance Cost Model for key-value store that aligns with and extends the classic Five-Minute Rule, guiding decisions on optimal storage configuration selection.
\item We present the architecture of TierBase, a distributed key-value store that leverages a tiered storage design to effectively balance performance and cost-efficiency across diverse workloads.
\item We introduce and evaluate several cost-optimizing techniques implemented in TierBase, including pre-trained data compression, elastic threading mechanisms, and the utilization of persistent memory.
\item We demonstrate the effectiveness of our cost optimization framework through extensive evaluations using synthetic benchmarks, real-world workloads, and case studies from Ant Group's production environments. These evaluations showcase TierBase's superior cost-effectiveness compared to existing solutions and its ability to achieve significant cost reductions in large-scale, data-intensive applications.
\end{itemize}

\section{Space-performance Cost Modeling}\label{sec:cost_model}

In this section, we introduce a novel cost model for key-value storage systems, based on real-world implementations at Ant Group. Our model unifies performance and storage costs, providing a comprehensive framework for system optimization in online data serving with real-time latency requirements.

\begin{figure}[t]
 
  \centering
  \subfigure[Single-Tier Storage]{
    \centering
    \includegraphics[scale=0.29]{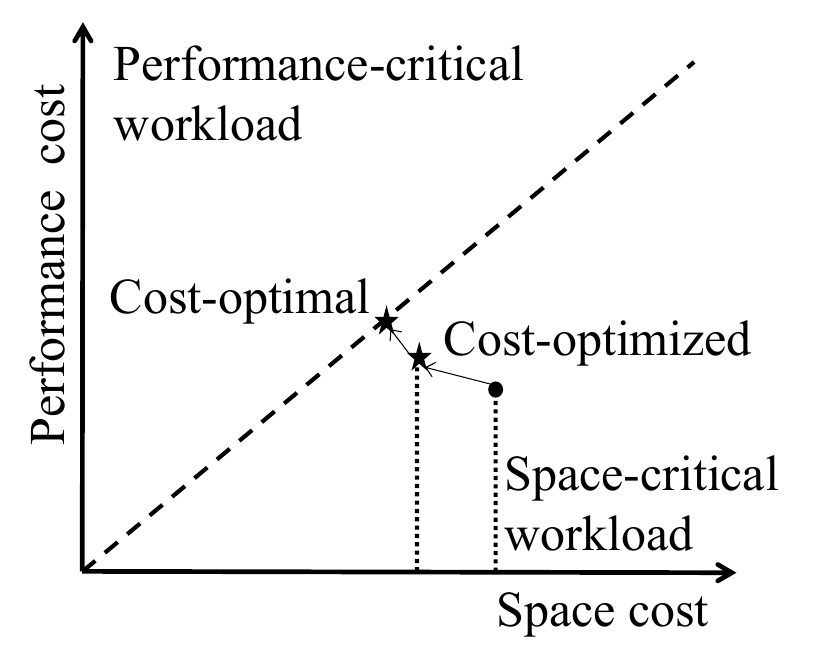}
    \label{fig:cost-model1}
  }
  \hspace{-0.2cm}
  \subfigure[Tiered Storage]{
    \centering
    \includegraphics[scale=0.29]{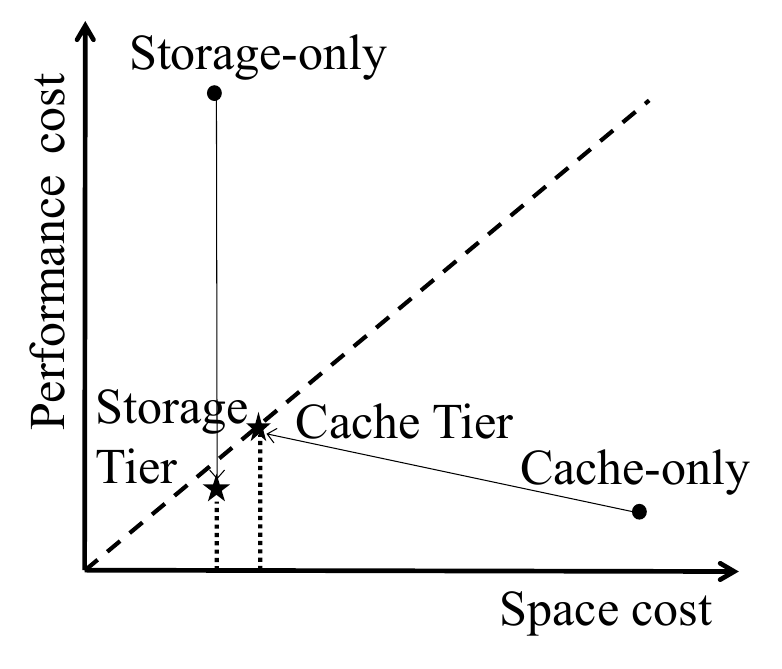}
    \label{fig:cost-model2}
  }
  
  \caption{Space-Performance Cost Model}
  \label{fig:cost-model}
  
\end{figure}

\subsection{Cost Analysis Framework}

Our framework is built upon two primary components:
\noindent\textbf{Performance Cost ($PC$)} in key-value storage systems reflects the expenses associated with data transfer from storage media to end-users, encompassing resource utilization for both read and write operations. This includes CPU overhead, network I/O, disk IOPS consumption, and memory bandwidth usage.

\noindent\textbf{Space Cost ($SC$)} depends on the resources and expenses for data storage, varying by storage medium and space used. In caching and in-memory storage, it's linked to the data volume in RAM, which is fast but expensive. Disk-based storage costs are typically cheaper for large data volumes. These costs also account for storage overhead related to data structures and replicas necessary to ensure reliability and availability.

\noindent\textbf{Space-Performance Cost Model.}
Our Space-Performance Cost Model is based on the observation that in enterprise data centers and cloud environments, resource instances (virtual machines or containers with compute and storage resources) are typically provided with pre-defined allocations. These allocations are usually evenly divided to maximize utilization, precluding arbitrary resource allocation.

Given this constraint, for a given workload $w$, which is a stream of read and write operations on a given resource instance $i$ and storage system with configuration $s$, both the maximum performance ($MaxPerf(w,i,s)$) and the maximum storable data amount ($MaxSpace(w,i,s)$) are deterministic and quantifiable. These metrics are measured in queries per second (QPS) and gigabytes (GB), respectively. 

In a distributed, shared-nothing architecture, we define the monetary cost $C$ as the maximum of the performance cost ($PC$) and the space cost ($SC$) for a given workload on a set of resource instances $i$ with the same configuration:

\begin{definition}[Cost of workload $w$]
The cost of workload $w$ is defined as the maximum of the performance cost and the space cost for a given workload on a set of resource instances $i$ with a specific storage configuration $s$:

\begin{equation}
C(w, i, s) = \max(PC(w, i, s), SC(w, i, s))
\end{equation}

Where:
\begin{align*}
PC(w, i, s) &= Cost(i) \times \left\lceil \frac{QPS(w)}{MaxPerf(w, i, s)} \right\rceil \\
SC(w, i, s) &= Cost(i) \times \left\lceil \frac{DataSize(w)}{MaxSpace(w, i, s)} \right\rceil
\end{align*}

Here, $Cost(i)$ is the monetary cost of a single resource instance $i$, $QPS(w)$ is the total queries per second for the workload $w$, $DataSize(w)$ is the total amount of data to be stored for workload $w$, and $MaxPerf(w, i, s)$ and $MaxSpace(w, i, s)$ are the maximum performance and space capacity for the given resource instance $i$ and storage configuration $s$, respectively.
\end{definition}

In a distributed, shared-nothing architecture, the maximum of the performance cost and space cost is used because the system must be provisioned to meet the greater of the two demands: query processing or data storage. This ensures that sufficient resources are allocated to handle the workload's heaviest requirement, whether it's the query throughput or the data volume. 

For real-world deployments, we incorporate tolerance ratios for both $MaxPerf$ and $MaxSpace$, ensuring system redundancy and reliability. These ratios accommodate variations in workload distribution and access patterns, enabling adaptation to scenarios that deviate from the cost model's assumption of even data sharding.

In Figure \ref{fig:cost-model1}, we illustrate both the performance and space costs. For workloads where the performance cost exceeds the space cost, we categorize them as \textit{performance-critical workloads}. Conversely, when the space cost is dominant, we refer to these as \textit{space-critical workloads}.

\subsection{Cost Efficiency Metrics}\label{sec:cost_mertrics}

Our cost model provides a framework for optimizing resource allocation and system configuration in distributed key-value storage systems. By leveraging the insights from this model, we can make informed decisions to minimize overall costs while meeting both performance and space requirements.

Consider a scenario where a workload typically requires multiple resource instances. We can simplify our model by removing the ceiling function and define the cost metrics as follows:

\begin{definition}[Cost Metrics]
We define two key cost metrics:
\begin{align*}
CPQPS &= Cost(i) / MaxPerf(w,i,s) \\
CPGB &= Cost(i) / MaxSpace(w,i,s)
\end{align*}
where CPQPS is the Cost per Query per Second, representing the performance cost incurred for processing each query per second, and CPGB is the Cost per GB, representing the space cost for storing each gigabyte of data.
\end{definition}

For ease of presentation in subsequent discussions, we will consider the same $i$ and $w$, allowing us to represent different configurations as subscripts in later formulae.
Using these cost metrics, we can express the total cost of the storage system as:

\begin{equation}
C = \max(CPQPS \times QPS, CPGB \times DataSize)
\end{equation}

This formulation allows us to evaluate and optimize the overall system cost by considering both performance and space requirements.

\subsection{Space-Performance Trade-off and Optimal Cost Theorem}

Our cost model reveals an inherent trade-off between performance and space costs in key-value storage systems. This trade-off forms the basis for our Optimal Cost Theorem.

\begin{definition}[Space-Performance Trade-off of Storage Configurations]
Given a set of storage configurations ${S}$, the space-performance trade-off describes the relationship between the Cost per Query per Second ($CPQPS$) and the Cost per Gigabyte ($CPGB$), expressed as: $CPQPS_s = f(CPGB_s), \quad s \in S$
where $f$ is a non-increasing function. As $CPGB_s$ decreases for a given configuration $s$, $CPQPS_s$ tends to increase, and vice versa.
\end{definition}

A typical example of this trade-off is data compression: for a fixed compression algorithm, setting higher compression levels reduces space cost ($CPGB$) but increases performance cost ($CPQPS$) due to added computational overhead, allowing a trade-off between storage efficiency and query performance.

Based on the trade-off, we establish the Optimal Cost Theorem:

\begin{theorem}[Optimal Cost $C^*$]
For a given workload $w$ with requirements $QPS$ and $DataSize$, and a set of storage configurations $S$, the optimal cost $C^*$ is achieved by selecting the configuration $s^* \in S$ that minimizes the overall cost while balancing performance and space costs:
$C^* = \min_{s \in S} \max(PC_{s}, SC_{s})$

The optimal configuration $s^*$ is one that minimizes the absolute difference between performance and space costs:
$s^* = \argmin_{s \in S} |PC_{s}  - SC_{s}|$
\end{theorem}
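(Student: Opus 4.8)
The first identity $C^* = \min_{s\in S}\max(PC_s,SC_s)$ is immediate: by the Cost Metrics of Section~\ref{sec:cost_mertrics}, $\max(PC_s,SC_s)$ is exactly the cost $C$ incurred under configuration $s$, so minimizing over $s$ is the definition of the optimal cost. Hence the plan is to concentrate on the second claim — that a cost-minimizing configuration is one which equalizes $PC_s$ and $SC_s$ as closely as possible — and to make precise the sense in which this holds.

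First I would exploit the trade-off. Since $CPQPS_s = f(CPGB_s)$ with $f$ non-increasing, I can order (or, in the idealized continuously tunable regime, parameterize by a scalar $t$ — think of $t$ as a compression level) the configurations so that along $t$ the performance cost $PC(t) = QPS \cdot CPQPS(t)$ is non-decreasing while the space cost $SC(t) = DataSize \cdot CPGB(t)$ is non-increasing. Then $g(t) := \max(PC(t),SC(t))$ is the pointwise maximum of a non-decreasing and a non-increasing function, hence quasi-convex in $t$: it decreases while $SC(t) > PC(t)$ (the space-critical regime, where $g(t)=SC(t)$ and raising $t$ helps) and increases once $PC(t) \ge SC(t)$ (the performance-critical regime, where $g(t)=PC(t)$ and lowering $t$ helps).

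Consequently the minimum is attained at the crossover $t^*$ with $PC(t^*) = SC(t^*)$, i.e. exactly where $|PC_s - SC_s|$ vanishes and is therefore minimized. To close the argument I would verify global optimality directly rather than rely on quasi-convexity alone: given $s^*$ with $PC_{s^*}=SC_{s^*}$, any other configuration is either ``more compressed'', forcing $PC_s \ge PC_{s^*}$ and hence $\max(PC_s,SC_s) \ge PC_{s^*} = \max(PC_{s^*},SC_{s^*})$, or ``less compressed'', forcing $SC_s \ge SC_{s^*}$ and the same conclusion; so $s^*$ attains $C^*$. This mirrors the break-even reasoning underlying the classical Five-Minute Rule.

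The hard part will be the case where $S$ is genuinely discrete and no configuration exactly balances the two costs. There $\argmin_s |PC_s - SC_s|$ picks whichever of the two configurations bracketing the virtual crossover is nearer to it, and this need not coincide with $\argmin_s \max(PC_s,SC_s)$ — a small instance with a large jump in $PC$ across the crossover shows the two criteria can disagree, and ties in $|PC_s - SC_s|$ must be broken by comparing $\max(PC_s,SC_s)$. I would therefore either (i) state the theorem under a density/continuity assumption on the configuration space (e.g.\ a continuously adjustable compression level), under which the crossover is realized and the ``balance'' characterization is exact, or (ii) retain the discrete setting but weaken the conclusion to: the optimum lies among the configurations adjacent to the balance point and is identified by a direct cost comparison there. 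The cleanest writeup adopts the continuous idealization, where the equality $PC_{s^*}=SC_{s^*}$ is exactly the optimality condition.
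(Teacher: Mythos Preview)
Your argument is correct and reaches the same conclusion as the paper, but the route differs. The paper argues by contradiction: it assumes an optimal $s^*$ with $PC_{s^*}\neq SC_{s^*}$, then invokes the trade-off to produce a perturbed configuration $s'$ with $PC_{s'}=PC_{s^*}-\epsilon$ and $SC_{s'}=SC_{s^*}+\delta$ whose maximum is strictly smaller, contradicting optimality. You instead parameterize the configuration space by a scalar $t$, observe that $\max(PC(t),SC(t))$ is the pointwise maximum of a monotone increasing and a monotone decreasing function and hence quasi-convex, locate the minimum at the crossover, and then verify optimality directly by a case split on ``more'' versus ``less'' compressed configurations. Both proofs implicitly rely on a continuous (or at least perturbable) configuration space; the paper leaves this assumption tacit, whereas you make it explicit and correctly note that in a genuinely discrete $S$ the criterion $\argmin_s|PC_s-SC_s|$ can disagree with $\argmin_s\max(PC_s,SC_s)$. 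That caveat is a useful addition the paper does not make; your proposed resolution (state the theorem under a continuity assumption, or weaken the discrete conclusion to ``the optimum is adjacent to the balance point'') is the right way to make the statement precise.
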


\begin{proof}
Let $s^*$ be the optimal configuration that minimizes the overall cost:

$C^* = \min_{s \in S} \max(PC_s, SC_s)$

Assume, for contradiction, that $PC_{s^*} \neq SC_{s^*}$. Without loss of generality, let $PC_{s^*} > SC_{s^*}$. Then:

$C^* = PC_{s^*}$

Now, consider a configuration $s'$ that slightly reduces $PC_{s^*}$ at the expense of increasing $SC_{s^*}$, such that:

$PC_{s'} = PC_{s^*} - \epsilon$
$SC_{s'} = SC_{s^*} + \delta$

Where $\epsilon > 0$ and $\delta > 0$ are small positive values.

If we choose $\epsilon$ and $\delta$ such that $SC_{s^*} + \delta < PC_{s^*} - \epsilon$, then:

$\max(PC_{s'}, SC_{s'}) = PC_{s'} = PC_{s^*} - \epsilon < PC_{s^*} = C^*$

This contradicts the assumption that $s^*$ is the optimal configuration. Therefore, our initial assumption must be false, and we must have $PC_{s^*} = SC_{s^*}$ for the optimal configuration.

Hence, the optimal configuration $s^*$ is one that minimizes the absolute difference between performance and space costs:

$s^* = \argmin_{s \in S} |PC_s - SC_s|$

This proves both parts of the theorem.
\end{proof}

The proof demonstrates that any imbalance between performance and space costs can be optimized to yield a lower total cost, thus establishing the optimal point at their equality.

This theorem provides a guiding principle for optimizing key-value storage systems. It suggests that the most cost-effective configuration is one where the system's resources are balanced such that neither space nor performance costs dominate. This balance point represents the optimal trade-off between space and performance for the given workload.

\subsection{Cost Model with Tiered Storage}\label{cost_tired}

Tiered storage systems typically comprise two layers: a cache tier and a storage tier. The cache tier, often utilizing memory-based technologies, prioritizes performance, while the storage tier focuses on capacity and cost-effectiveness. This structure allows systems to balance performance and capacity requirements more efficiently than single-tier alternatives.

In a tiered storage system, the total cost is a function of both the cache tier and the storage tier. We propose a comprehensive cost model for tiered storage systems that accounts for both performance and capacity costs across tiers:

\begin{align}
C_{tiered} = \max(&PC_{cache} 
+ PC_{miss}\times MR, \nonumber \\
&SC_{cache} \times CR) \nonumber \\
+ \max(&PC_{storage} \times MR, SC_{storage})
\label{eq:tiered_cost}
\end{align}

where:
\begin{itemize}
\item $CR$ is the cache ratio (cache capacity / total capacity)
\item $MR$ is the cache miss ratio (proportion of requests served by the storage tier)
\item $PC_{miss}$ is the additional performance cost incurred on a cache miss
\item $PC_{cache/storage}$ and $SC_{cache/storage}$ represent performance and space costs for each tier
\end{itemize}

This model enables a comparison between tiered storage and single-tier alternatives. Tiered storage becomes cost-effective when its total cost is lower than both a pure cache solution and a pure storage solution, expressed as: $C_{tiered} < \min(C_{cache}, C_{storage})$. This approach can be particularly effective for workloads with skewed data access patterns.

The model provides insights into optimizing cache ratio ($CR$) and managing miss ratio ($MR$) to minimize overall system cost. Further detailed cost analysis on tiered storage is presented in Section \ref{cost_analysis_tired_storage}.

\subsection{Cost Optimization Strategies}

Our cost model serves as a valuable guide for optimization efforts, assisting system designers and administrators in their decision-making process. We discuss optimization strategies for both single-tier and tiered storage systems, using the space-performance cost model as our framework.

\subsubsection{Single-Tier Storage Optimization}
In single-tier storage systems, the primary goal is to balance performance cost ($CPQPS$) and space cost ($CPGB$) to minimize overall cost. The optimization strategies depend on the workload characteristics:

\sstitle{Space-Critical Workloads}: 
When the workload is space-critical. Here, reducing $CPGB$ becomes the primary goal. 
One approach is to enable data compression, which reduces $CPGB$ but may increase $CPQPS$ due to the additional computational overhead. The overall cost can still be optimized through such trade-offs according to the Optimal Cost Theorem. Other potential approaches include using instance with larger storage and implementing tiered storage solutions.

\sstitle{Performance-Critical Workloads}: In this case, optimization efforts should focus on reducing $CPQPS$. Strategies may include:
Optimizing query execution, tiered caching mechanisms, and utilizing faster storage media for frequently accessed data.

\sstitle{Resource Instance Selection}: 
Choosing the right type of resource instance can significantly impact both $CPQPS$ and $CPGB$. This involves analyzing different instance types to find the optimal balance between performance capabilities and storage space for the specific workload.

\subsubsection{Tiered Storage Optimization}

Tiered storage systems offer additional opportunities for cost optimization by leveraging the strengths of different storage tiers. The effectiveness of tiered storage depends on three key factors:

\sstitle{Skewed Data Access Pattern:}
Optimal tiered storage performance occurs when both Cache Ratio ($CR$) and Miss Ratio ($MR$) are low. This scenario is typical in workloads with high temporal locality, where a small subset of "hot" data is frequently accessed. The cache tier can store this data, resulting in a low $CR$ while serving most requests (low $MR$).

\sstitle{Cost Disparity between Tiers:} 
The significant cost difference between cache and storage tiers is crucial for effective tiered storage. The cache tier offers high performance at a higher cost per unit capacity, while the storage tier provides lower performance at a much lower cost. This disparity allows the system to balance high performance for frequently accessed data with cost-effective storage for less accessed information.

\sstitle{Low Miss Penalty}: 
The miss penalty ($PC_{miss}$) represents the additional performance cost when a request misses the cache. A low miss penalty is vital for effective tiered storage as it reduces the impact of cache misses, allowing for a smaller cache (lower $CR$) without significantly degrading overall performance. We will present our techniques for minimizing $PC_{miss}$ in Subsection \ref{sec:writethough_back}.

\subsubsection{Workload-Driven Optimization Approaches}

Instead of immediately diving into TierBase-specific techniques, we can introduce a more general framework for mapping workload characteristics to optimization strategies. This sets the stage for the detailed techniques that will be discussed in the following section.

Table \ref{tab:workload_optimization} presents a mapping of various workload features to optimization strategies implemented in TierBase. This mapping illustrates how specific workload characteristics inform the choice of optimization techniques.



\begin{table}[]
\caption{Workload Features and Optimization Options}
\small
\begin{tabular}{c|c}
\hline
\textbf{Workload Features} & \textbf{Optimization Options} \\ \hline
\begin{tabular}[c]{@{}c@{}}Skewed access patterns \\  (a small subset of data accessed frequently)\end{tabular} & \begin{tabular}[c]{@{}c@{}}Tiered Storage\\ Elastic Threading\end{tabular} \\ \hline
Low latency requirements & \begin{tabular}[c]{@{}c@{}}In Memory Mode\\ PMem Usage\end{tabular} \\ \hline
\begin{tabular}[c]{@{}c@{}}Space-critical \\ (Large volume, low throughput)\end{tabular} & \begin{tabular}[c]{@{}c@{}}Larger Storage Instance\\ Tiered Storage\\ Pre-trained Compression\end{tabular} \\ \hline
\begin{tabular}[c]{@{}c@{}}Performance-critical \\ (High throughput, small volume)\end{tabular} & \begin{tabular}[c]{@{}c@{}}In Memory Mode\\ PMem for Persistence\end{tabular} \\ \hline
Read-heavy, Write-less & \begin{tabular}[c]{@{}c@{}}Elastic Threading\\ Pre-trained Compression\end{tabular} \\ \hline
Write-heavy & \begin{tabular}[c]{@{}c@{}}Write-back Caching\\ PMem for WAL\end{tabular} \\ \hline
\end{tabular}
\label{tab:workload_optimization}
\end{table}

These optimization techniques are designed to address specific workload characteristics and leverage the strengths of both single-tier and tiered storage architectures. By applying these strategies guided by our cost model, we can iteratively refine system configurations, resource allocations, and feature enablements.

In the following sections, we will detail each of these optimization techniques, explaining how they work and how they contribute to overall system cost-effectiveness. Later, in Section \ref{sec:exp}, we will demonstrate how these strategies are applied in practice. Our evaluation methodology involves replaying real-world workloads and assessing costs across various configurations, providing empirical validation of our cost-optimization approach.


\section{\tbase System Design}

\begin{figure*}[ht]
  \centering
  \includegraphics[width=0.8\textwidth]{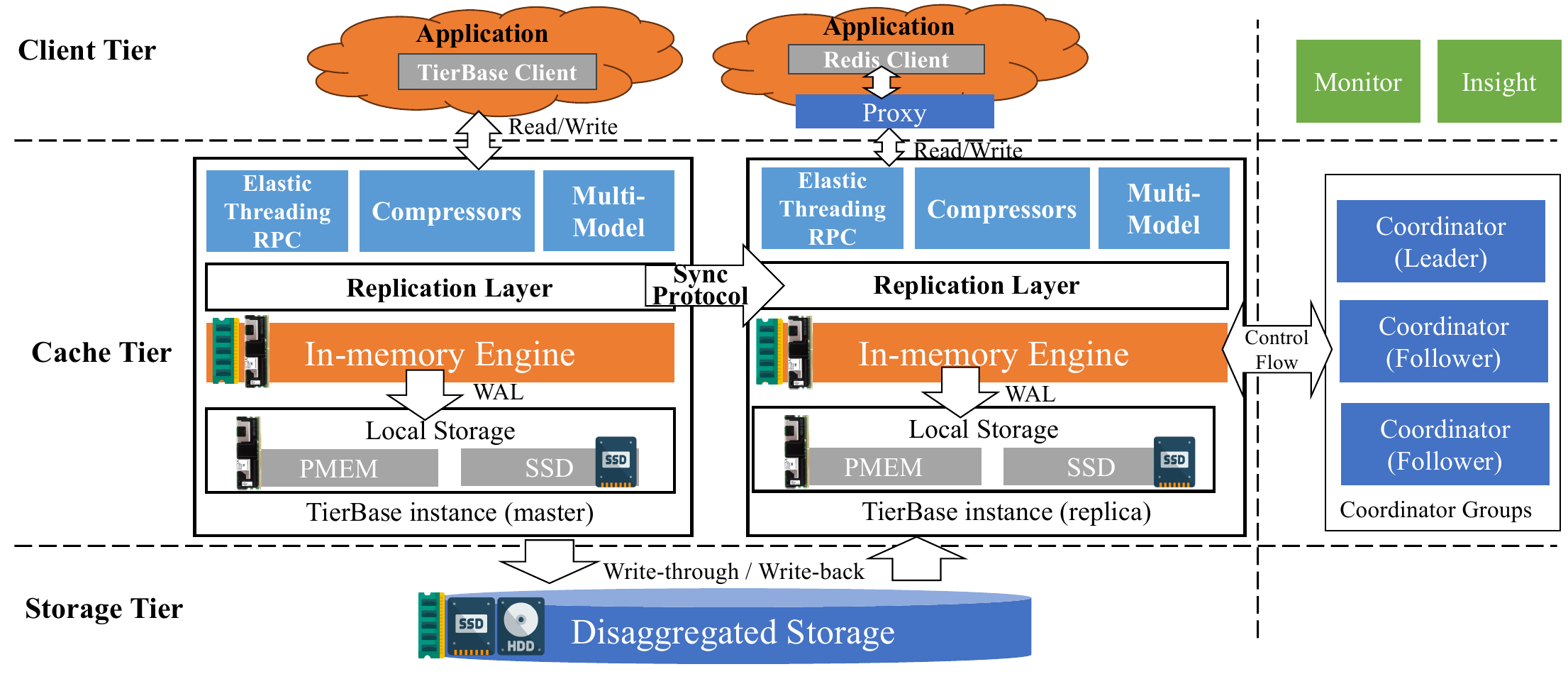}
  \caption{TierBase architecture overview}
  \label{fig:overview}
\end{figure*}

Building upon our unified space-performance cost model, we present TierBase, a high-performance, distributed key-value storage system designed to optimize cost for large-volume online storage. TierBase leverages a tiered storage architecture to provide low-latency, cost-effective data access while addressing the challenges highlighted in our cost model.

TierBase extends Redis's capabilities by supporting not only basic key-value operations like GET and SET, but also advanced data structures such as lists, sets, and sorted sets. Additionally, it provides CAS (Compare-And-Set) operations, wide-column data handling and vector search.

TierBase supports vector search by integrating the VSAG library\cite{vsag},  a vector indexing library developed by Ant Group for similarity search, which enables efficient ANN queries over high-dimensional vectors within our key-value infrastructure.  The integration supports dynamic vector operations, including real-time insertion and deletion in memory, demonstrating performance improvements of 3-4x compared to conventional algorithms such as HNSW.

As shown in Figure \ref{fig:overview}, TierBase incorporates a tiered storage architecture, separating the caching tier from the storage tier, allowing independent scaling based on workload requirements. 
The cache tier employs in-memory hash tables stored in DRAM or persistent memory (PMem) for efficient random access performance, while the storage tier typically utilizes a LSM-tree structure stored on SSD or HDD to optimize write performance and storage capacity. This architecture allows TierBase to effectively balance high-speed data access with efficient data storage across different storage media.

TierBase also features memory compression and elastic threading support, optimizing resource utilization dynamically.

The architecture of TierBase is structured into three primary tiers: client, cache, and storage.

\noindent\textbf{Client Tier}. The client tier consists of TierBase clients and proxy services. TierBase clients, compatible with native Redis clients, retrieve cluster routing information from the coordinator cluster for direct data access. They handle failover, cluster scaling automatically. For small-scale scenarios, TierBase provides a proxy service facilitating rapid integration for public cloud users.

\noindent\textbf{Cache Tier}. 
The cache tier in TierBase consists of instances and coordinators. Each instance serves as a data node with key hashes for data sharding. The cache instances implements hash tables for efficient key-value storage. Moreover, the cache tier can operate independently without the storage tier for in-memory storage use cases, providing high-speed data access similar to systems like Redis and Memcached. TierBase supports both single-replica and multi-replica modes, implementing various replication protocols to accommodate different reliability requirements. Coordinators oversee the entire cluster, managing failovers and administering tenant resource allocation.

\noindent\textbf{Storage Tier}. 
The storage tier provides data persistence through a disaggregated key-value storage system. The cache tier directly accesses this tier for cache misses, employing either write-through or write-back policies for data writing. 

TierBase offers various disaggregated storage options through a pluggable storage adapter. In our experiments, we focus on the Universal Configurable Storage (UCS)\footnote{UCS is an internal system developed by Ant Group and does not have a publicly available reference at the time of writing this paper.}, a sophisticated real-time serving and analytical storage engine. UCS implements an LSM-Tree with a shared disk architecture and remote compaction. This design ensures optimal online performance while supporting both row and columnar storage formats.

Although our experiments focus on an LSM-tree storage engine, the pluggable storage adapter in TierBase allows integration with various disaggregated storage systems based on different data structures. Consequently, the cost optimization techniques in the cache tier and cost evaluation with the cost model can be applied to a wide range of key-value stores.

Additionally, TierBase includes monitoring and analysis tools for real-time metrics collection, problem diagnosis, and workload-based suggestions. It integrates with Cougar\cite{Cougar} for automatic scaling in cloud environments.


\section{Cost Optimization Strategies}
In this section, we introduce the well-designed features which aim to optimize the cost of TierBase.
\label{Cost Optimization Strategies}

\subsection{Tiered Storage}
\label{sec:writethough_back}

TierBase introduces a tiered storage architecture that disaggregates cache and storage components, allowing for independent optimization. This approach directly addresses the space-performance trade-off highlighted in our cost model. The cache tier is optimized for speed (minimizing $PC_{cache}$), while the storage tier is designed for capacity and durability (optimizing $SC_{storage}$). Both tiers can scale independently, accommodating diverse workloads and data access patterns.

To ensure data consistency and reliability in this disaggregated architecture, we adapt the well-known caching techniques "write-through" and "write-back". These strategies are commonly used in conventional hardware scenarios to synchronize data between cache and storage. However, applying these techniques to a disaggregated architecture presents unique challenges. In traditional contexts, write-through and write-back policies are implemented within a single tier, where the cache and storage are tightly coupled. This tight coupling allows for simpler coordination and synchronization mechanisms between the cache and storage, as they reside on the same physical node or have low-latency communication channels. In contrast, the separation of cache and storage tiers in TierBase introduces new complexities in maintaining data consistency and synchronization, requiring careful design and implementation of coordination protocols and synchronization strategies.

\subsubsection{Write-through Caching} 
In the write-through caching policy (Figure \ref{fig:writethrough}), TierBase prioritizes data consistency between the cache and storage tiers. When a write request is received, it is first executed on the cache tier and then synchronously passed to the disaggregated storage tier before acknowledging completion to the application. If the storage update succeeds, the cache tier maintains the updated data; otherwise, the corresponding cache entry is invalidated, and an error is returned to the application.

To ensure data consistency in the presence of failures while reducing $PC_{miss}$, TierBase employs several key techniques:

\sstitle{Temporary Update Buffer.} Each connection maintains a temporary update buffer. Incoming update requests are initially performed on this buffer, and the results are used to update the main cache. If the storage write succeeds, the data is seamlessly transferred from the temporary buffer to the main cache. In case of a storage write failure, the corresponding entry in the main cache is removed, ensuring subsequent reads fetch the data from the storage, maintaining consistency.

\sstitle{Sequential Write Ordering.} TierBase uses a per-key write queue to maintain the sequential order of writes to the same key, ensuring consistent execution order in the cache tier for asynchronous storage updates.

\sstitle{Write Coalescing.}
Within Redis's event loop, TierBase coalesces multiple write commands targeting the same key into a single operation, similar to the concept of group commit in database systems. This approach efficiently updates the store with the final result, reducing the number of write operations and consequently lowering $PC_{miss}$.

This write-through strategy is particularly effective in environments where read operations significantly outnumber write operations, and high data reliability are critical. 

\begin{figure}[t]
   
\subfigure[Write-through caching policy]{
  \centering
  \includegraphics[scale=0.39]{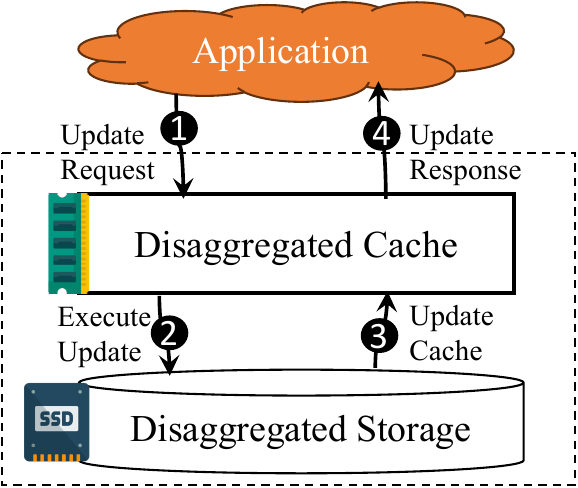}
  \label{fig:writethrough}
}
\hspace{-0.3cm}
\subfigure[Write-back caching policy]{
  \centering
  \includegraphics[scale=0.39]{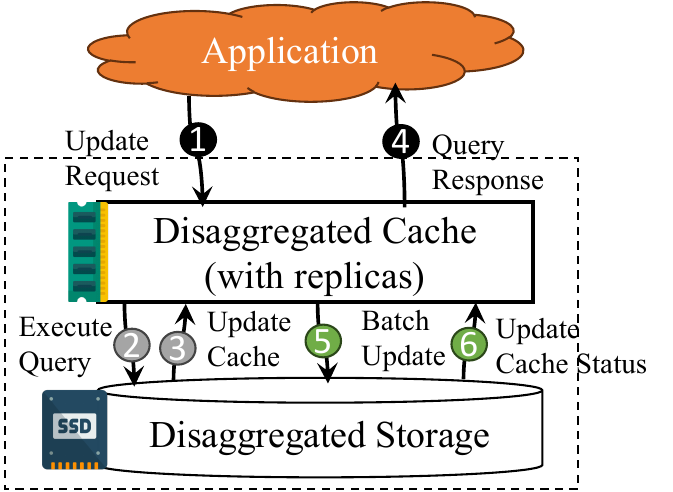}
  \label{fig:writeback}
}
\caption{Caching policy for write operation}
\label{fig:cachingpolicy}
\end{figure}

\subsubsection{Write-back Caching}

The write-back caching policy in TierBase prioritizes performance by optimizing $PC_{cache}$ and reducing $PC_{miss}$. Updates are first written to the cache tier with immediate response to the application, while data synchronization to the storage tier is deferred and performed asynchronously in batches. This approach minimizes $PC_{storage}$ by reducing the frequency of writes to the storage tier.

In cases where requested data is not present in the cache during an update operation, TierBase fetches the data from the storage tier before updating the cache. Data updated in the cache but not yet synchronized to storage is marked as``dirty" and periodically propagated in batches, minimizing remote calls to the storage tier.

Implementing write-back caching in a tiered storage architecture introduces unique challenges in ensuring data reliability and optimizing synchronization efficiency between cache and storage:

\sstitle{Replication of Cache.} To prevent data loss in case of cache tier failure, TierBase maintains multiple replicas of dirty data and cache contents.

\sstitle{Managing Dirty Data.} TierBase balances the scale of dirty data by restricting its size and establishing maximum interval times for batch updates. A backpressure mechanism is activated when dirty data approaches a predefined threshold.

\sstitle{Optimizing Update.} TierBase minimizes remote calls to the storage tier by batching updates and merging multiple updates for the same key.

\sstitle{Deferred Cache-fetching.} 
For update operations on missing keys, TierBase accumulates operations and submits batch read tasks to fetch data from storage, reducing read requests and minimizing costs in both tiers.

The consistency level in write-back caching is determined by the cache tier's configurable coherent protocol, ensuring updates are eventually propagated to underlying storage with strong consistency support.

\subsubsection{Write-through versus write-back}

TierBase supports both write-through and write-back caching policies, each offering different trade-offs:

Write-through caching provides lower $SC_{cache}$ as it doesn't store dirty data, but potentially higher $PC_{cache}$ for write-heavy workloads due to synchronous storage updates. Write-back caching offers lower $PC_{cache}$ and $PC_{miss}$ for write-heavy workloads due to deferred and batched storage updates, but incurs higher $SC_{cache}$ due to storing dirty data and potential replication.

The choice between these policies depends on specific workload characteristics and relative costs of cache and storage resources. Write-back caching may provide better cost-performance for write-heavy workloads with good temporal locality, while write-through caching may be more cost-effective for read-heavy workloads or when storage writes are relatively inexpensive.

TierBase's flexible configuration allows users to select the appropriate caching policy based on their application's requirements and cost structure, optimizing the space-performance trade-off for various workloads and scenarios.

\begin{figure}[tb]
\begin{center}
\begin{tabular}[t]{c}
 \includegraphics[scale=0.37]{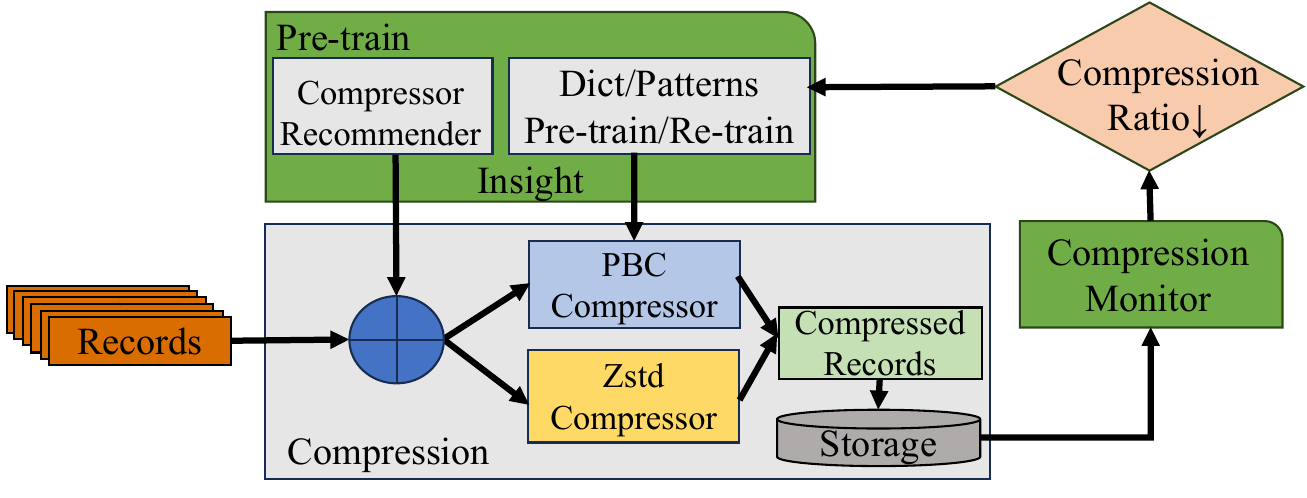}
\end{tabular}

\end{center}
\caption{The framework of pre-trained based compression}
\label{fig:compression-overview}
\end{figure}

\subsection{Pre-trained Compression Mechanism}
\label{pre_compress}

In the context of our space-performance cost model, in-memory data compression plays a crucial role in optimizing the trade-off between storage costs ($SC$) and performance costs ($PC$) within the memory tier. 

In TierBase, we develop a pre-trained compression mechanism which includes two efficient compression algorithms: our newly developed Pattern-Based Compression (PBC) \cite{pbc,github_pbc} and the widely adopted Zstandard (Zstd) \cite{zstd} by Meta. The framework of pre-trained compression is shown in Figure \ref{fig:compression-overview}. In the pre-training phase, Zstd builds a dictionary by identifying frequent strings in the data, while PBC employs hierarchical clustering and a unique similarity metric to pinpoint and extract data patterns. In the compression phase, the resulting patterns, along with residual strings, are then compressed further using string compression techniques.

Initially, we construct the dictionary (patterns) offline using samples from data records. We then apply this dictionary to the entire workload, enabling data compression and decompression.

A key challenge with pre-trained compression in production is the need to re-sample and re-train datasets when patterns change to avoid reduced compression ratios. To address this, TierBase introduces a monitoring service that continuously tracks compression efficiency and initiates re-sampling and re-training when necessary. Specifically, it monitors the compression ratio and the number of data records that do not align with the pattern. Re-sampling and retraining are triggered when the compression ratio falls below a baseline level or when the rate of unmatched records exceeds a predefined threshold.

Furthermore, TierBase’s Insight service includes a compressor recommender that automatically suggests the optimal compressor based on data types and performance requirements. This adaptive compression mechanism supports various data types, dynamically rebuilding the dictionary to accommodate changes in data patterns.

Our experiments demonstrate the cost-effectiveness of our pre-trained compression mechanism, enabling TierBase to dynamically adjust its compression strategy and balance the space-performance trade-off based on evolving data patterns and workloads. Despite a moderate increase in performance cost ($PC$) for write operations due to compression overhead, the significant reduction in space cost ($SC$) and high decompression speed for read operations optimize overall cost-effectiveness. By adjusting compression levels, TierBase can fine-tune the balance between $SC$ and $PC$, achieving an optimal point in the cost model to minimize total cost while maintaining high performance across diverse workloads.

\subsection{Persistent Memory Utilization}

A standout component in TierBase's storage strategy is the adoption of persistent memory(PMem).  The PMem distinguishes itself through its capacity, affordability compared to DRAM, swift memory-like access speeds, and its non-volatile nature.  The utilization of PMem enhances the overall performance while reduce the cost of TierBase in two folds:

\noindent\textbf{DRAM Extension}: Serving as an economical DRAM supplement, PMem allows for efficient memory use by keeping frequently accessed (hot) data, in DRAM, while less accessed (cold) data is stored in PMem. This strategy optimizes the balance between space cost ($SC$) and performance cost ($PC$).

\noindent\textbf{WAL Persistence}: PMem greatly improves TierBase’s Write-Ahead Log (WAL) persistence by overcoming the I/O operations per second (IOPS) bottleneck found in disk or cloud storage, ensuring faster and more consistent data synchronization. Crucially, WAL files are first written to a PMem-based persistent ring buffer, then batch-moved to cloud storage, achieving high throughput and real-time persistence, thus significantly boosting performance in high-demand scenarios.

To address the performance gap between PMem and DRAM, particularly in write latency, TierBase employs a refined memory allocation strategy. Small, frequently accessed data (keys and indexes) are stored in DRAM, while larger value data resides in PMem. Write operations to PMem are optimized through batching: data structures are assembled in DRAM before bulk transfer to PMem, reducing the impact on performance costs.

Our production experience demonstrates that PMem, when integrated into a tiered storage architecture alongside DRAM and SSDs, delivers strong performance even with straightforward data structures. This approach effectively balances performance and cost in TierBase's storage system.

\subsection{Elastic Threading}

\begin{figure}[t]
  \centering
  \includegraphics[scale=0.4]{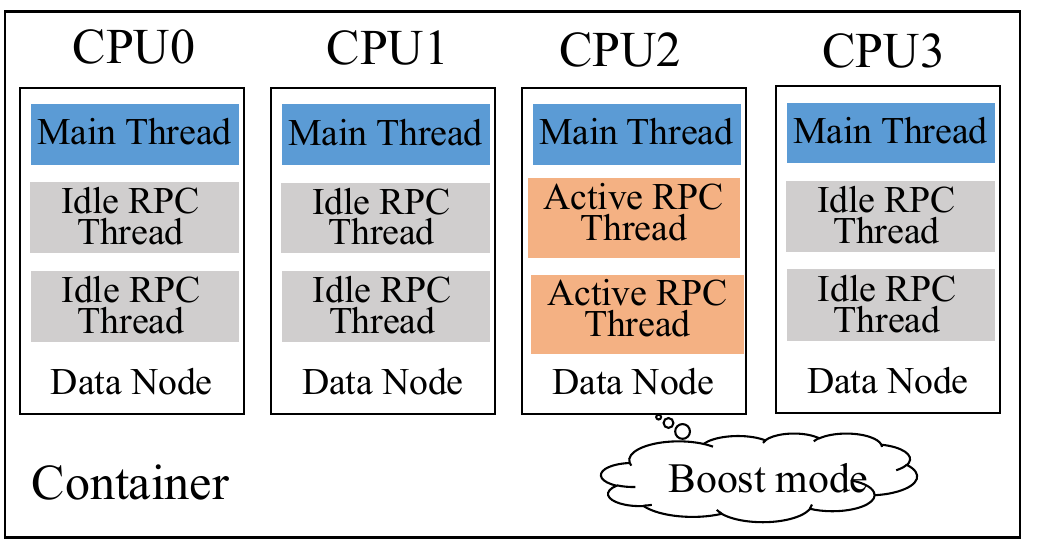}
  \caption{Elastic threading}
  \label{fig:elastic}
\end{figure}

TierBase implements an innovative elastic threading approach to optimize performance cost ($PC$) within allocated node resources. This dynamic method seamlessly switches between single-thread and multi-thread modes based on workload demands, enhancing system responsiveness and resource efficiency without external scaling.

In normal conditions, TierBase operates in a default single-thread mode, utilizing an event-driven model with epoll. This approach offers high CPU efficiency and lower $PC$ for typical workloads. We claim that the efficiency of a single-threaded process per data shard generally outperforms that of multi-threading due to reduced locking overhead, a principle supported by Amdahl's Law\cite{Amdahl}.

When the workload on a particular instance increases significantly, TierBase seamlessly transitions to multi-threaded mode by dynamically adding threads within the container's pre-allocated CPU resources. Containers are provisioned with CPU capacity based on anticipated peak workloads, but this capacity isn't fully utilized during normal operations. Elastic threading allows TierBase to leverage these underutilized CPU resources when needed, boosting the affected instance's performance without exceeding resource limits or incurring additional costs. When the workload subsides, TierBase switches back to single-thread mode, allowing CPU resources to be used by other processes within the container and maximizing resource efficiency.

Elastic threading is particularly effective for skewed workloads like dynamic hotspots. Typically, one instance might switch to multi-threaded mode while others remain in single-threaded mode within the same container, optimizing resources across the system. If the container's overall CPU load remains consistently high, the system recognizes the need to scale out to further enhance tenant performance.

This approach improves responsiveness to immediate demands and optimizes resource use, preventing unnecessary allocation during low-activity periods. By dynamically balancing between single-threaded efficiency and multi-threaded performance, elastic threading significantly contributes to overall CPU  efficiency. Elastic threading uses only idle CPU resources within the allocated container. Threads are dynamically added or removed based on workload demands, ensuring efficient resource use without over-provisioning or incurring extra costs.

\section{Further Cost Analysis and Framework}

\label{detailed_analysis}

This section provides an in-depth analysis of our cost model by examining space-performance trade-offs in storage systems, relating these to our Optimal Cost Theorem and the classic Five-Minute Rule. We adapt the Five-Minute Rule for modern distributed systems and provide a framework for cost optimization for tiered storage.

\subsection{Adapting the Five-Minute Rule for Modern Storage Systems}
\label{interval}
The Five-Minute Rule\cite{gray19875,five20,five30}, introduced by Jim Gray and Gianfranco Putzolu in 1985, has been a cornerstone in database system design. Originally formulated for single-server environments, it provided a simple heuristic for deciding whether data should be kept in memory or on disk based on its access frequency:

{\small
\setlength{\abovedisplayskip}{5pt}
\setlength{\belowdisplayskip}{5pt}
\begin{align}
BreakEvenInterval &= \left( \frac{PagesPerMBofRAM}{AccessPerSecondPerDisk} \right) \nonumber \\
&\quad \times \left( \frac{PricePerDiskDrive}{PricePerMBofRAM} \right) \label{eq:rule_equation}
\end{align}
}

However, in today's distributed and cloud-based systems, we need to consider a broader range of factors and trade-offs. We propose an adapted version of the Five-Minute Rule that aligns with our cost model:
{\small
\begin{equation}
BreakEvenInterval = \frac{CPQPS_{slow}}{CPGB_{fast} \times AverageRecordSize}
\end{equation}
}

Where $CPQPS_{slow}$ is the Cost Per Query Per Second for slower, space-optimized storage, $CPGB_{fast}$ is the Cost Per Gigabyte for faster, performance-optimized storage, and $AverageRecordSize$ is the average size of data records in the workload.

To illustrate how our adapted rule relates to the original, we provide the following mapping.
\begin{itemize}
    \item The ratio $\left( \frac{PricePerDiskDrive}{AccessPerSecondPerDisk} \right)$ effectively represents the $CPQPS_{slow}$.
    \item $PricePerMBofRAM$ correlates to $CPGB_{fast}$.
    \item $PagesPerMBofRAM$ is conceptually similar to \\$\left( \frac{1GB}{AverageRecordSize} \right)$.
\end{itemize}

This formulation determines the optimal point on the space-performance trade-off spectrum for a given workload in a distributed environment. It can be illustrated by comparing fast, in-memory storage systems like Redis \cite{redis} with slower, more space-efficient systems like HBase\cite{hadoop}, or by examining different configurations of the same database system.

The break-even interval analysis, derived from our adapted Five-Minute Rule, optimizes data placement by comparing data access intervals to the break-even point. It guides the choice between fast, performance-oriented storage and slower, space-efficient options.

Our Cost Optimal Theorem extends this concept by determining the best overall storage settings for a given workload, aiming to minimize system cost by balancing performance cost ($PC$) and space cost ($SC$). It considers the entire system configuration, including multiple storage tiers and complex workload characteristics, to guide high-level design and resource allocation decisions.

Our comprehensive experiments assessed the performance and cost efficiency of our system against leading open-source key-value databases using this integrated approach. The evaluation results on real application workloads demonstrate the model's effectiveness in guiding cost-saving database system designs. The case study in Section \ref{case_study} showcases how the break-even interval, derived from the Five-Minute Rule, helps choose the most cost-effective TierBase configuration within the framework established by the Cost Optimal Theorem.

\subsection{Cost Analysis of Tiered Storage}
\label{cost_analysis_tired_storage}

Revisiting Equation \ref{eq:tiered_cost}, we can focus on optimizing the cost of the cache tier. In disaggregated storage systems with a sufficiently large storage pool, the storage tier cost is dominated by $SC$ when $MR < SC_{storage} / PC_{storage}$ for skewed access patterns. As illustrated in Figure \ref{fig:cost-model2}, this allows us to concentrate on the cache tier cost:

\begin{align}
Cost_{cache} = \max(&PC_{cache} + PC_{miss} \times MR, \nonumber \\
&SC_{cache} \times CR)
\end{align}

To find the optimal cost, we consider the relationship between the Miss Ratio ($MR$) and the Cache Ratio ($CR$), typically represented by the Miss Ratio Curve\cite{10.1145/3185751}, where $MR = f(CR)$, and $f$ is a non-increasing function.

\begin{theorem}[Optimal Cache Tier Cost]
The optimal cost for the cache tier of a tiered storage is achieved when the performance cost equals the space cost:
\begin{equation}
PC_{cache} + PC_{miss} \times f(CR^*) = SC_{cache} \times CR^*
\end{equation}
where $CR^*$ is the optimal cache ratio.
\end{theorem}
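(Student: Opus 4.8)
The plan is to mirror the perturbation argument already used for the Optimal Cost Theorem, exploiting the monotonicity of the two competing terms in $Cost_{cache}(CR) = \max\bigl(PC_{cache} + PC_{miss}\cdot f(CR),\, SC_{cache}\cdot CR\bigr)$. First I would set $g(CR) := PC_{cache} + PC_{miss}\cdot f(CR)$ and $h(CR) := SC_{cache}\cdot CR$. Since $f$ is non-increasing and $PC_{miss}\ge 0$, the performance-side term $g$ is non-increasing in $CR$, whereas the space-side term $h$ is strictly increasing (assuming $SC_{cache}>0$). Hence $Cost_{cache}$ is the pointwise maximum of a non-increasing and a strictly increasing function over the feasible interval $CR\in[0,1]$.

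Second, I would argue by contradiction exactly as in the proof of the Optimal Cost Theorem. Suppose $CR^*$ minimizes $Cost_{cache}$ but $g(CR^*)\neq h(CR^*)$. If $g(CR^*) > h(CR^*)$, then $Cost_{cache}(CR^*) = g(CR^*)$; moving to a slightly larger cache ratio $CR^* + \Delta$ weakly decreases $g$ and strictly increases $h$, so for $\Delta$ small enough both stay on the correct side and $\max\bigl(g(CR^*+\Delta),\, h(CR^*+\Delta)\bigr) = g(CR^*+\Delta) \le g(CR^*)$ — strictly smaller whenever $g$ is locally strictly decreasing, and if $g$ is locally flat we can keep increasing $\Delta$ until $h$ catches up without ever raising the cost. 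Symmetrically, if $h(CR^*) > g(CR^*)$, decreasing $CR$ reduces $h$ while not increasing $g$. Either way this contradicts optimality of $CR^*$, so $g(CR^*) = h(CR^*)$, which is precisely the claimed identity $PC_{cache} + PC_{miss}\cdot f(CR^*) = SC_{cache}\cdot CR^*$.

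Third, I would address existence of the balance point inside $[0,1]$: define $\varphi(CR) = h(CR) - g(CR) = SC_{cache}\cdot CR - PC_{cache} - PC_{miss}\cdot f(CR)$, which is strictly increasing (a strictly increasing term minus a non-increasing term). If $\varphi(0)\le 0\le\varphi(1)$ and $f$ — hence $\varphi$ — is continuous, the intermediate value theorem yields a root $CR^*$, which the monotonicity argument above shows is the unique minimizer. When no interior crossing exists (the space cost dominates everywhere on $[0,1]$, or the performance cost does), the minimizer sits at a boundary and the equality degenerates; I would flag this as the relevant corner case and state the theorem under the natural assumption, standard when reasoning about Miss Ratio Curves, that an interior balance point exists.

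The main obstacle I anticipate is not the perturbation step itself — that is essentially identical to the earlier proof — but the lack of stated regularity on $f$: the Miss Ratio Curve is assumed only non-increasing, so it may have flat segments or jump discontinuities, in which case no $CR$ with $g(CR)=h(CR)$ need exist and one only obtains the minimizer at the sign change of $\varphi$. Making the statement fully rigorous therefore requires either assuming $f$ continuous (and ideally strictly decreasing on the relevant range), or restating the conclusion as ``$CR^*$ is the smallest $CR$ with $h(CR)\ge g(CR)$,'' which coincides with the displayed equality under continuity.
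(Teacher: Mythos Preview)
Your proposal is correct and follows essentially the same route as the paper: define $g(CR)=PC_{cache}+PC_{miss}\cdot f(CR)$ and $h(CR)=SC_{cache}\cdot CR$, observe their opposing monotonicity, and argue by perturbation that the minimizer of $\max(g,h)$ occurs at the crossing. Your treatment is in fact more careful than the paper's --- you get the perturbation directions right and you explicitly address existence via the intermediate value theorem and the regularity caveats on $f$, none of which the paper discusses.
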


Let $CR^*$ be the optimal cache ratio that minimizes the overall cost of the cache tier:

$Cost_{cache}^* = \min_{0 \leq CR \leq 1} \max(PC_{cache} + PC_{miss} \times f(CR), SC_{cache} \times CR)$

Define two functions:

$g(CR) = PC_{cache} + PC_{miss} \times f(CR)$
$h(CR) = SC_{cache} \times CR$

Note that $g(CR)$ is non-increasing (as $f(CR)$ is non-increasing) and $h(CR)$ is increasing linearly with $CR$.

The optimal cost occurs at the intersection of these two functions. To see why, consider:

1. If $g(CR) > h(CR)$, we can decrease $CR$ to reduce cost.
2. If $g(CR) < h(CR)$, we can increase $CR$ to reduce cost.
3. The minimum cost occurs when neither of these improvements is possible, i.e., at $g(CR) = h(CR)$.

Therefore, the optimal cache ratio $CR^*$ satisfies:
\begin{equation}
PC_{cache} + PC_{miss} \times f(CR^*) = SC_{cache} \times CR^*
\end{equation}

This equality represents the balance point where performance cost (including miss penalty) equals space cost, minimizing the overall cache tier cost.
This theorem provides a principle for optimizing tiered storage systems: the most cost-effective configuration is one where the cache tier's performance cost (including the cost of cache misses) equals its space cost. This balance point represents the optimal trade-off between performance and space for the cache tier.

In practice, estimating the exact $CR^*$ is challenging, as $f(CR)$ can be complex and highly dependent on specific workload characteristics. Nevertheless, this theorem serves as a valuable target for optimization efforts, guiding cache size tuning. To address this challenge, we propose an evaluation-based approach in Section \ref{sec:framework} to find the optimal $CR$.

Furthermore, this analysis provides a way for determining when to use tiered storage over single-tier solutions and how to optimally configure the cache tier within a tiered storage system, enhancing our understanding of cost-effective storage design in modern disaggregated environments.

\subsection{Cost Optimization Framework}
\label{sec:framework}

In order to speedup the cost optimization procedure, we develop a sample-based method to calculate the cost for various configurations with regard to real-world workloads. The method involves the following steps:

\begin{enumerate}
\item \textbf{Sample}: Sample data snapshots and record a representative period of workload from production instances.
\item \textbf{Load}: Load the sampled data snapshot into a testing instance with a specific configuration.
\item \textbf{Replay}: Replay the recorded real-world key-value operation traces on the testing instance, measuring and collecting the maximum performance and maximum space utilization for the workload.
\item \textbf{Calculation}: Calculate the workload cost based on measurements.
\item \textbf{Iteration}: Repeatedly perform steps 2-4 with different configurations to approach cost-optimal configuration.
\end{enumerate}

This method simulates key-value store behavior under realistic conditions, providing accurate performance and cost assessments. By using real workload traces and access patterns, we obtain a precise representation of system performance. This method enables comprehensive exploration of the configuration space, ensuring identification of the most cost-effective configuration for each individual workload.

While the configuration space for cost optimization can be large, in practice, we focus on the most impactful parameters specific to the workload, guided by user input and prior experience. This approach narrows the candidate configurations substantially. Optimization computations are performed offline and parallelized to accelerate the process, ensuring that the time invested is minor compared to the long-term cost savings achieved. To address the cold start problem, we initialize the system with configurations based on user inputs and best practices.The detailed optimization results and cost savings are analyzed with case studies in Section \ref{sec:exp}.

\section{Experiments}
\label{sec:exp}

\subsection{Settings}
The experimental evaluation is conducted on the following servers. For the cache tier, we use three servers with dual Intel Xeon Platinum 8263C CPUs at 2.50 GHz, 192GB DRAM, and eight 128GB Intel\textsuperscript{\textregistered} Optane\texttrademark{} DCPMM 100 series(App Direct Mode), while for the storage tier, we use three servers with Intel Xeon Platinum 8163 CPU at 2.50 GHz, 64GB DRAM, and 8TB SSDs. For all servers, NUMA is enabled and Hyper-threading is disabled for linear performance scaling and to avoid contention between logical cores. The system and software environments are as follows: Linux kernel version 4.19.91, OpenJDK version 1.8.0, GCC version 10.2.1, Dragonfly 1.23.0, Redis 6.0.17, Cassandra 4.0.11, HBase 2.4.1, and Memcached 1.6.20.

YCSB (Yahoo! Cloud Serving Benchmark) \cite{ycsb} is utilized which encompasses the load and run phase. Our experiments utilize two distinct default workloads from YCSB: Workload A, characterized by a predominance of write operations, and Workload B, distinguished by a higher proportion of read operations. We have adapted YCSB to accept user-specified datasets for data insertion, as opposed to the default use of random strings as values. In particular, the Cities dataset \cite{cities} is designated as the default for our tests. We deploy 16 YCSB threads for single-thread cases and 48 for multi-thread cases.

We focus our comparisons on widely-used production systems to evaluate TierBase's performance and cost-effectiveness in real deployment scenarios, which have essential features implemented, such as full failure recovery. These systems can reflect the performance and cost accurately in real world applications.
We selected Redis \cite{redis}, Memcached \cite{memcached} and Dragonfly \cite{dragonflydb} for  caching system comparison. Redis and Memcached are established caching systems, extensively utilized across diverse applications. Dragonfly is a newly introduced, high-performance caching system. For databases with persistence, we select Redis with AOF, Cassandra \cite{Cassandra}, and HBase \cite{hadoop} as competitors. Specifically, Redis-AOF ensures data durability by logging writes to disk, which can impact performance due to the additional disk I/O overhead.

Instances represent the fundamental units of resource allocation. In systems operating in single-thread mode, each instance is allocated 1 CPU core and 4GB of memory. In multi-thread mode and databases with persistence, the allocation for each instance increases to 4 CPU cores and 16GB of memory. These specifications are common instance configurations used by Ant Group.

\subsection{Performance Evaluation}
\subsubsection{Caching Systems} 

Figure \ref{fig:performance} illustrates the performance comparison of four caching systems: TierBase, Redis, Memcached, and Dragonfly. During this evaluation, TierBase was tested in its default mode without any cost optimization techniques enabled. We evaluate the performance for single-thread and multi-thread mode separately and report the throughput and 99th percentile tail latency respectively.

In single-thread mode (Figures \ref{fig:performance}(a) and \ref{fig:performance}(b)), TierBase and Redis exhibit similar performance, outperforming Memcached and Dragonfly across all workloads. This distinction arises because Memcached and Dragonfly are principally engineered for multi-thread environments, while in contrast, Redis is meticulously optimized for single-thread mode. TierBase maintains the lowest latency across most workloads. During the load phase, the latency of TierBase and Redis is significantly lower than that of Memcached and Dragonfly.

In multi-thread mode (Figures \ref{fig:performance}(c) and \ref{fig:performance}(d)), Memcached and Dragonfly surpass TierBase and Redis. Memcached combines a streamlined caching approach with a lightweight threading model, minimizing inter-thread contention and enhancing speed. Dragonfly benefits from a shared-nothing architecture for threads, boosting its parallel processing. Although TierBase's per-instance throughput is slightly lower in multi-thread mode, it excels in real-world scenarios through efficient scaling. Figure \ref{fig:performance}(c) shows that 4 single-threaded TierBase instances outperform a single multi-threaded instance of Memcached or Dragonfly using equivalent resources, leading to a lower performance cost. In practice, scaling out across multiple instances meets performance requirements, and TierBase's cost-effectiveness makes this economically viable. Thus, TierBase effectively balances performance and cost, aligning with the demands of large-scale applications.

\input{figures/performance_multi}
\input{figures/persistence_single}%
\setlength{\textfloatsep}{10pt}
\subsubsection{Persistence Mechanisms} 
Figure \ref{fig:ps} outlines TierBase's performance with four persistence mechanisms in single-thread mode: WAL, WAL while using PMem as the persistent ring buffer(WAL-PMem), write-back and write-through introduced in Section \ref{sec:writethough_back}.

For the throughput performance(Figure \ref{fig:ps}(a)), write-back significantly outperforms write-through in the load phase by 92.52\%, due to its deferred writing mechanism that reduces immediate write operation overhead. In various read-write workloads, write-back's throughput is about twice of write-through, demonstrating its efficiency in write-intensive tasks. The WAL-PMem mode, while not as effective as write-back, still surpasses write-through, indicating benefits by using persistent memory. 
However, WAL mode outperforms WAL-PMem due to its use of SSDs and asynchronous disk flushes every second, while WAL-PMem synchronizes to PMem per transaction, potentially incurring higher synchronization overhead.

In terms of latency (Figure \ref{fig:ps}(b)), write-through experiences the highest latency due to its immediate write to storage, being around 3 times higher than write-back in the load phase. Write-back, with its deferred write, significantly lowers latency, particularly in write-heavy scenarios. WAL-PMem offers a middle ground, with lower latency than write-through but higher than write-back.

\subsection{Features Evaluation}

\begin{table}[]
\centering
 \caption{Evaluation of compression techniques}
\begin{tabular}{cc|c|c|c}
\hline
\multicolumn{2}{c|}{Datasets} & Cities & \multicolumn{1}{l|}{KV1} & \multicolumn{1}{l}{KV2} \\ \hline
\multicolumn{1}{c|}{\multirow{2}{*}{Comp. Ratio}} & PBC & \textbf{0.2003} & \textbf{0.2341} & \textbf{0.2297} \\
\multicolumn{1}{c|}{} & Zstd & 0.2920 & 0.4042 & 0.4096 \\ \hline
\multicolumn{1}{c|}{\multirow{2}{*}{\begin{tabular}[c]{@{}c@{}}Overall \\ Comp. Ratio\end{tabular}}} & PBC & \textbf{0.4919} & \textbf{0.6884} & \textbf{0.6219} \\
\multicolumn{1}{c|}{} & Zstd & 0.5508 & 0.7594 & 0.7117 \\ \hline
\multicolumn{1}{c|}{\multirow{3}{*}{\begin{tabular}[c]{@{}c@{}}Throughput\\ (SET)\end{tabular}}} & PBC & 54469 & 74878 & 65329 \\
\multicolumn{1}{c|}{} & Zstd & 61667 & 75018 & 67558 \\
\multicolumn{1}{c|}{} & Raw & \textbf{122324} & \textbf{122414} & \textbf{120496} \\ \hline
\multicolumn{1}{c|}{\multirow{3}{*}{\begin{tabular}[c]{@{}c@{}}Throughput\\ (GET)\end{tabular}}} & PBC & 109998 & 119688 & 115888 \\
\multicolumn{1}{c|}{} & Zstd & 96861 & 112866 & 102616 \\
\multicolumn{1}{c|}{} & Raw & \textbf{138045} & \textbf{132397} & \textbf{132961} \\ \hline
\end{tabular}
 
 \label{fig:compress study}
\end{table}

\subsubsection{Compression.}
As introduced in \ref{pre_compress}, \tbase has implemented pre-trained based compression strategies to mitigate memory utilization. We evaluate effectivness of the pre-trained based compression methods:(i) Zstd-b, (ii) Zstd-d, (iii) PBC. Basic Zstd\cite{zstd} (denoted as Zstd-b) which is without pre-trained dictionaries, Zstd with pre-trained dictionaries(denoted as Zstd-d) and Pattern-Based Compression\cite{pbc} (PBC). We also include the raw data without compression(Raw) as the bar of throughput.

As shown in Table \ref{fig:compress study}, the pre-trained based methods, PBC and Zstd-d, consistently outperform Zstd-b. It demonstrates that the pre-trained mechanism, by prior analysis and storage of common data patterns, is able to enhance the compression ratio. Notably, PBC consistently achieves higher compression ratios than Zstd.  In KV datasets, the distinctive patterns within the values lead to a more significant improvement in PBC's compression performance.  Specifically, PBC surpasses Zstd-d by 43\% and Zstd-b by 74\% in average compression ratios. These enhanced ratios contribute to PBC's sustained superiority in overall compression performance.

In the evaluation of throughput. all three compression methods perform worse compared to Raw, especially in SET operations. Among the three compression mechanisms tested, Zstd-d demonstrated the highest performance. In public datasets, the throughput of Zstd-d was approximately twice as high as PBC and about 3.5 times higher than Zstd-b. This notable difference is primarily due to the higher computational overhead in PBC's compression process which involves pattern matching and string encoding. Meanwhile, Zstd-b, without pre-trained dictionary, necessitates online data analysis during compression, which hampers throughput. In contrast, when considering average GET operation throughput, PBC not only surpasses Zstd-d but also nearly parallels the velocity of Raw. Zstd-b still demonstrates the least favorable performance.

The implementation of compression introduces a calculated trade-off, modestly impeding throughput performance in exchange for substantial memory conservation, thereby augmenting the judicious utilization of resources. Based on the pre-trained compression strategy, it is possible to preserve common patterns in the data in advance, avoiding real-time computation during compression and decompression, thereby enhancing the efficiency and effectiveness of the compression process. The Section \ref{sec:cost evaluation} will provide an in-depth exploration of the multifaceted benefits engendered by compression.

\subsubsection{Elastic threading.}
\indent
In order to show the effect of elastic threading,  we simulate a scenario of workload burst to test the system's adaptability to sudden workload influxes. At the beginning, the workload maintenance a low QPS(20,000). Then, we increase in the number of client requests to simulate a surge in workload at the 15 second. This state lasts for 30 seconds and finally the workload returned to normal, the low QPS state. We represent single-thread, multi-thread and elastic threading modes as $s$, $m$ and $e$ (e.g. TierBase-s, TierBase-m, TierBase-e).

As shown in Figure \ref{fig:elastic-thread}, under normal conditions, all databases manage well, with Redis showing some jitter in multi-thread mode. Upon increased workload, systems' throughput reaches their limits, and latency rise. In single-thread mode, \tbase has the highest latency, followed by Redis. However, with elastic threading, \tbase initially faces higher latency but quickly adjusts to have the lowest latency, equal to its multi-thread mode performance. In terms of throughput, both \tbase and Redis hit 120,000 QPS in single-thread mode, with Redis peaking at 180,000 QPS and \tbase at over 240,000 QPS in multi-thread mode. 

In summary, the experiment result indicates that elastic threading allows \tbase to operate in a cost-saving single-thread mode under normal scenario, while automatically switching to a multi-thread mode to achieve higher throughput performance during workload spikes, without the need for manual intervention.
\begin{figure}[t]
\centering
\begin{tikzpicture}[scale=0.6]
\hspace{0.1cm}

\begin{axis}[
    hide axis,
    scale only axis,
    height=0pt,
    width=100pt,
    legend style={anchor=north west,legend columns=5,font=\Large,fill opacity=0.7, draw opacity=1, row sep=-4pt,draw=black, thick,line width=1pt},
    legend image post style={line width=2pt, scale=1},
    xmin=0, xmax=1, ymin=0, ymax=1,
]
\addlegendimage{color=c4,mark=none,line width=2pt}
\addlegendentry{TierBase-s}
\addlegendimage{color=c5,mark=none,line width=2pt}
\addlegendentry{TierBase-e}
\addlegendimage{color=c8,mark=none,line width=2pt}
\addlegendentry{TierBase-m}
\addlegendimage{color=c1,mark=none,line width=2pt}
\addlegendentry{Redis-s}
\addlegendimage{color=c3,mark=none,line width=2pt}
\addlegendentry{Redis-m}

\end{axis}
\end{tikzpicture}
\begin{tikzpicture}[scale=0.42]
\begin{axis}[
    width=1\textwidth,
    height=0.37\textwidth,
    xlabel={\textbf{\Huge time(s)}},
    ylabel={\textbf{\Huge throughput ($k$ qps)}},
    legend pos=north west, 
    legend style={at={(0.13,0.95)},legend columns=4,font=\huge,fill opacity=0.7, draw opacity=1, row sep=-4pt,draw=black, thick,},
    legend image post style={line width=2pt, scale=1},
    xlabel near ticks, 
    ylabel near ticks, 
    scaled y ticks=false,
    tickwidth=3pt,
    axis line style={line width=2pt}, 
    xtick={0,10,20,30,40,50,60},
    ytick={50000,100000,150000,200000,250000}, 
    yticklabels={50,100,150,200,250},
    tick label style={font=\huge},
    xmin=0, 
    ymin=0, 
    xmax=60,
    ymax=250000,
]

\addplot[color=c4,mark=none,line width=2.5pt] coordinates {
    (0.0,19488.0)(1.0,19488.0)(2.0,20000.0)(3.0,20000.0)(4.0,20000.0)(5.0,20000.0)(6.0,20000.0)(7.0,19999.0)(8.0,20001.0)(9.0,20000.0)(10.0,20000.0)(11.0,20000.0)(12.0,20000.0)(13.0,20000.0)(14.0,20000.0)(15.0,20000.0)(16.0,117079.0)(17.0,125656.0)(18.0,127974.0)(19.0,127606.0)(20.0,126807.0)(21.0,127185.0)(22.0,127744.0)(23.0,127534.0)(24.0,127745.0)(25.0,127004.0)(26.0,127434.0)(27.0,127843.0)(28.0,127844.0)(29.0,127726.0)(30.0,127855.0)(31.0,127595.0)(32.0,127419.0)(33.0,127957.0)(34.0,127515.0)(35.0,128424.0)(36.0,127045.0)(37.0,127111.0)(38.0,127226.0)(39.0,127681.0)(40.0,125741.0)(41.0,127648.0)(42.0,127846.0)(43.0,127109.0)(44.0,127494.0)(45.0,127964.0)(46.0,20004.0)(47.0,20000.0)(48.0,20000.0)(49.0,20000.0)(50.0,20000.0)(51.0,20000.0)(52.0,20000.0)(53.0,20000.0)(54.0,20000.0)(55.0,20000.0)(56.0,20000.0)(57.0,20000.0)(58.0,20000.0)(59.0,20000.0)(60.0,20000.0)
};

\addplot[color=c8,mark=none,line width=2.5pt] coordinates {
    (0.0,19488.0)(1.0,19536.0)(2.0,19999.0)(3.0,20001.0)(4.0,20000.0)(5.0,20000.0)(6.0,20000.0)(7.0,20000.0)(8.0,19999.0)(9.0,20001.0)(10.0,20000.0)(11.0,20000.0)(12.0,20000.0)(13.0,20000.0)(14.0,20000.0)(15.0,20000.0)(16.0,201025.0)(17.0,230721.0)(18.0,231544.0)(19.0,231982.0)(20.0,231885.0)(21.0,232184.0)(22.0,232251.0)(23.0,231499.0)(24.0,231904.0)(25.0,231440.0)(26.0,231700.0)(27.0,231986.0)(28.0,233104.0)(29.0,231452.0)(30.0,231883.0)(31.0,232374.0)(32.0,232605.0)(33.0,232432.0)(34.0,224010.0)(35.0,232374.0)(36.0,232387.0)(37.0,231040.0)(38.0,230614.0)(39.0,231530.0)(40.0,231595.0)(41.0,230363.0)(42.0,232270.0)(43.0,232503.0)(44.0,232173.0)(45.0,232641.0)(46.0,20000.0)(47.0,19999.0)(48.0,20001.0)(49.0,20000.0)(50.0,20000.0)(51.0,20000.0)(52.0,20000.0)(53.0,20000.0)(54.0,20000.0)(55.0,20000.0)(56.0,20000.0)(57.0,20000.0)(58.0,19999.0)(59.0,20001.0)(60.0,20000.0)
};

\addplot[color=c1,mark=none,line width=2.5pt] coordinates {
    (0.0,19536.0)(1.0,19536.0)(2.0,20000.0)(3.0,20000.0)(4.0,20000.0)(5.0,20000.0)(6.0,20000.0)(7.0,20000.0)(8.0,20000.0)(9.0,20000.0)(10.0,20000.0)(11.0,20000.0)(12.0,20000.0)(13.0,20000.0)(14.0,20000.0)(15.0,20000.0)(16.0,120388.0)(17.0,130602.0)(18.0,132450.0)(19.0,132997.0)(20.0,132966.0)(21.0,133058.0)(22.0,133003.0)(23.0,132950.0)(24.0,133370.0)(25.0,133042.0)(26.0,133252.0)(27.0,134094.0)(28.0,134062.0)(29.0,133764.0)(30.0,133300.0)(31.0,133865.0)(32.0,133665.0)(33.0,133749.0)(34.0,133206.0)(35.0,133259.0)(36.0,133309.0)(37.0,133216.0)(38.0,133887.0)(39.0,133845.0)(40.0,133726.0)(41.0,133714.0)(42.0,132199.0)(43.0,133033.0)(44.0,133242.0)(45.0,132800.0)(46.0,20001.0)(47.0,20000.0)(48.0,20000.0)(49.0,20000.0)(50.0,20000.0)(51.0,20000.0)(52.0,20000.0)(53.0,20000.0)(54.0,20000.0)(55.0,20000.0)(56.0,20000.0)(57.0,20000.0)(58.0,20000.0)(59.0,20000.0)(60.0,20000.0)
};

\addplot[color=c5,mark=none,line width=2.5pt] coordinates {
        (0.0,19516.48)(1.0,19516.48)(2.0,20000.0)(3.0,20000.0)(4.0,20000.0)(5.0,20000.0)(6.0,20000.0)(7.0,20000.0)(8.0,20000.0)(9.0,20000.0)(10.0,20000.0)(11.0,20000.0)(12.0,20000.0)(13.0,20000.0)(14.0,20000.0)(15.0,20000.0)(16.0,119579.0)(17.0,129090.0)(18.0,131563.0)(19.0,158616.0)(20.0,236259.0)(21.0,237313.0)(22.0,237910.0)(23.0,236167.0)(24.0,236718.0)(25.0,235826.0)(26.0,236400.0)(27.0,235382.0)(28.0,236266.0)(29.0,235692.0)(30.0,235949.0)(31.0,235648.0)(32.0,237195.0)(33.0,236669.0)(34.0,237532.0)(35.0,236905.0)(36.0,235698.0)(37.0,236585.0)(38.0,236028.0)(39.0,236046.0)(40.0,237090.0)(41.0,235845.0)(42.0,236865.0)(43.0,234758.0)(44.0,236447.0)(45.0,236306.0)(46.0,20000.0)(47.0,20000.0)(48.0,20000.0)(49.0,20000.0)(50.0,20000.0)(51.0,20000.0)(52.0,20000.0)(53.0,20000.0)(54.0,20000.0)(55.0,20000.0)(56.0,20000.0)(57.0,20000.0)(58.0,20000.0)(59.0,20000.0)(60.0,20000.0)
};

\addplot[color=c3,mark=none,line width=2.5pt] coordinates {
    (0.0,19515.48)(1.0,19515.48)(2.0,19985.0)(3.0,20001.0)(4.0,20019.02)(5.0,20000.0)(6.0,19980.02)(7.0,19649.65)(8.0,20370.0)(9.0,20000.0)(10.0,20000.0)(11.0,19980.02)(12.0,20001.0)(13.0,20004.0)(14.0,19995.0)(15.0,20000.0)(16.0,154378.0)(17.0,172106.0)(18.0,175443.0)(19.0,175994.0)(20.0,177979.0)(21.0,178280.0)(22.0,176984.0)(23.0,177939.0)(24.0,177218.0)(25.0,178613.0)(26.0,179861.0)(27.0,179722.0)(28.0,180756.0)(29.0,184383.0)(30.0,184638.0)(31.0,185038.0)(32.0,184709.0)(33.0,185969.0)(34.0,185273.0)(35.0,184723.0)(36.0,184724.0)(37.0,185282.0)(38.0,185161.0)(39.0,185137.0)(40.0,185131.0)(41.0,185022.0)(42.0,185338.0)(43.0,184819.0)(44.0,185208.0)(45.0,185466.0)(46.0,20000.0)(47.0,20000.0)(48.0,20000.0)(49.0,20000.0)(50.0,20000.0)(51.0,20000.0)(52.0,20000.0)(53.0,20000.0)(54.0,20000.0)(55.0,20000.0)(56.0,18776.0)(57.0,21224.0)(58.0,20000.0)(59.0,20000.0)(60.0,20020.02)
};

\end{axis}
\end{tikzpicture}
\caption{Performance of TierBase and Redis for workload boosting}
\label{fig:elastic-thread}
\end{figure}
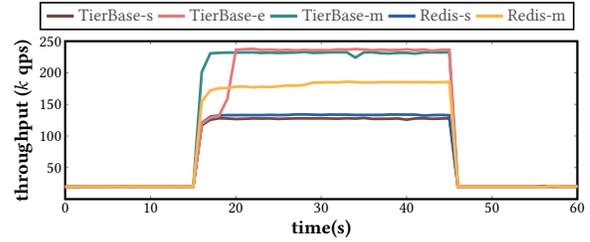

\subsection{Cost Evaluation} \label{sec:cost evaluation}

In the following two subsections, we evaluate the cost-effectiveness of TierBase and compare it with other representative systems under different configurations and synthetic workloads using the space-performance cost model. 

\subsubsection{Evaluation setup}

We employ the framework introduced in Section \ref{sec:framework} to evaluate synthetic workloads generated by YCSB using public datasets for write operations. Our simulated workload comprises 10GB data with 80,000 QPS for caching systems, and 10GB data with 40,000 QPS for databases with persistence. While our cost model is applicable to various workloads, we selected these specific parameters as a representative miniature of typical workloads at Ant Group.

The cost unit presented is relative, based on a standard container with 1 CPU core and 4GB of memory. All systems are tested within this standard container on a single instance, with $CPQPS$ and $CPGB$ calculated accordingly.
This evaluation methodology allows us to assess the cost-effectiveness of various systems under controlled conditions.

For systems employing replication (e.g., Redis with AOF, TierBase with WAL, and TierBase with write-back policy), we implement a master-replica setup in the cache tier to ensure data reliability. This configuration effectively doubles the cache tier cost.

We denote single-thread, multi-thread, and elastic threading as $s$, $m$, and $e$ respectively. TierBase-PMem is for PMem activation in TierBase, and TierBase-Zstd and TierBase-PBC for compression. Redis-AOF and TierBase-WAL denote Redis with AOF and TierBase with WAL, respectively. Write-through and write-back policies in TierBase are abbreviated as $wt$ and $wb$. Workloads with a cache ratio of 10 are labeled as $10X$.

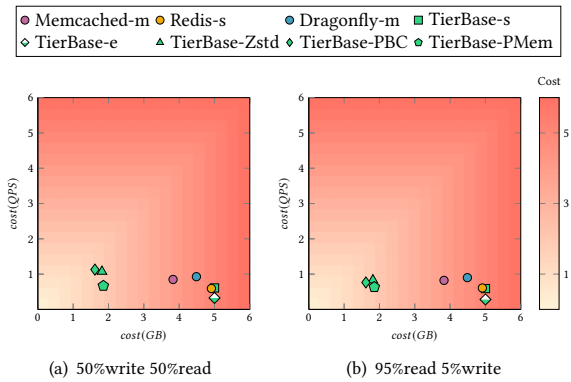
\begin{figure}
\centering

\begin{tikzpicture}[scale=0.6]
\begin{axis}[
    hide axis,
    scale only axis,
    height=0pt,
    width=0pt,
    xmin=0, xmax=1, ymin=0, ymax=1, 
    legend style={font=\fontsize{12}{10}\selectfont, legend columns=4, at={(0.5,0.9)}, anchor=north, row sep=0.5pt, column sep=2pt},
    legend image post style={mark size=2.5pt, only marks}, 
    legend cell align={left},
]

\addlegendimage{mark=*, fill=Purple} \addlegendentry{Memcached-m}
\addlegendimage{mark=*, fill=American Yellow} \addlegendentry{Redis-s}
\addlegendimage{mark=*, fill=Moonstone, draw=black} \addlegendentry{Dragonfly-m}
\addlegendimage{mark=square*, fill=Green2, draw=black} \addlegendentry{TierBase-s}
\addlegendimage{mark=halfsquare*, fill=Green2, draw=black} \addlegendentry{TierBase-e}
\addlegendimage{mark=triangle*, fill=Green2, draw=black} \addlegendentry{TierBase-Zstd}
\addlegendimage{mark=diamond*, fill=Green2, draw=black} \addlegendentry{TierBase-PBC}
\addlegendimage{mark=pentagon*, fill=Green2, draw=black} \addlegendentry{TierBase-PMem}
\end{axis}
\end{tikzpicture}

\par 
\subfigure[50\%write 50\%read]{
\begin{tikzpicture}[scale=0.52]
    \pgfplotsset{
        defaultmark/.style={draw=black, line width=0.25pt},
        mystyle/.style={
            only marks,
            y filter/.code={\pgfmathparse{\pgfmathresult*80000}\pgfmathresult},
            mark size=3pt
        },
        scatter/classes={
            Memcached4={mark=*, fill=Purple},
            Redis1={mark=*, fill=American Yellow},
            TBase1={mark=square*, fill=Green2, draw=black},
            TBaseAdaptive={mark=halfsquare*, fill=Green2, draw=black,mark size=4pt},
            TBaseZstd={mark=triangle*, fill=Green2, draw=black, mark size = 4pt},
            TBasePbc={mark=diamond*, fill=Green2, draw=black,mark size = 4pt},
            TBaseAEP={mark=pentagon*, fill=Green2, draw=black,mark size = 4pt},
            Dragonfly4={mark=*, fill=Moonstone, draw=black}
        }
    }
    \begin{axis}[      
        height=7cm,
        width=7cm,
        xlabel=$cost(GB)$,
        ylabel=$cost(QPS)$,
        xticklabel={
            \pgfmathparse{\tick*10}
            \pgfmathprintnumber{\pgfmathresult}
        },
        yticklabel={
            \pgfmathprintnumber{\tick}
        },
        ytick={1,2,3,4,5,6},
        xtick={0,0.1,0.2,0.3,0.4,0.5,0.6},
        view={0}{90},
        domain=0:0.6,
        y domain=0:6,
        colormap={custom}{
            color(0)=(color1)
            color(1)=(color2)
        },
        samples=20,
        samples y=20,
        scaled y ticks=false,
        scaled x ticks=false,
        yticklabel style={/pgf/number format/fixed},
        xticklabel style={/pgf/number format/fixed},
        legend style={font=\normalsize, at={(1.2,1.3)},legend columns=5},
        legend image post style={mark size=2.5pt},
        legend cell align={left},
    ]
    \addplot3[surf, shader=flat,forget plot] {max(x*10 , y)};
    
    \addplot3[mystyle, defaultmark, scatter, scatter src=explicit symbolic] coordinates {
        (0.382737758,0.0000105987,0) [Memcached4]
        (0.500086233,0.0000076017,0) [TBase1]
        (0.49130154,0.0000074033,0) [Redis1]
        (0.50011367,0.0000040523,0) [TBaseAdaptive]
        (0.181467481,0.0000134159,0) [TBaseZstd]
        (0.161547881,0.0000141014,0) [TBasePbc]
        (0.185388687,0.0000083526,0) [TBaseAEP]
        (0.448874594,0.0000115907,0) [Dragonfly4]
    };
    
    \end{axis}
\end{tikzpicture}
}
\subfigure[95\%read 5\%write]{
\begin{tikzpicture}[scale=0.52]
    \pgfplotsset{
        defaultmark/.style={draw=black, line width=0.25pt},
        mystyle/.style={
            only marks,
            y filter/.code={\pgfmathparse{\pgfmathresult*80000}\pgfmathresult},
            mark size=3pt
        },
        scatter/classes={
            Memcached4={mark=*, fill=Purple},
            Redis1={mark=*, fill=American Yellow},
            TBase1={mark=square*, fill=Green2, draw=black},
            TBaseAdaptive={mark=halfsquare*, fill=Green2, draw=black,mark size=4pt},
            TBaseZstd={mark=triangle*, fill=Green2, draw=black, mark size = 4pt},
            TBasePbc={mark=diamond*, fill=Green2, draw=black,mark size = 4pt},
            TBaseAEP={mark=pentagon*, fill=Green2, draw=black,mark size = 4pt},
            Dragonfly4={mark=*, fill=Moonstone, draw=black}
        }
    }
    \begin{axis}[      
        height=7cm,
        width=7cm,
        xlabel=$cost(GB)$,
        ylabel=$cost(QPS)$,
        xticklabel={
            \pgfmathparse{\tick*10}
            \pgfmathprintnumber{\pgfmathresult}
        },
        yticklabel={
            \pgfmathprintnumber{\tick}
        },
        ytick={1,2,3,4,5,6},
        xtick={0,0.1,0.2,0.3,0.4,0.5,0.6},
        colorbar,
        colorbar style={
            title=Cost,
            ytick={1,3,5},
            yticklabels={$1$,$3$,$5$}
        },
        view={0}{90},
        domain=0:0.6,
        y domain=0:6,
        colormap={custom}{
            color(0)=(color1)
            color(1)=(color2)
        },
        samples=20,
        samples y=20,
        legend style={font=\tiny, legend columns=3},
        legend image post style={mark size=2.5pt},
        legend cell align={left},
        scaled y ticks=false,
        scaled x ticks=false,
        yticklabel style={/pgf/number format/fixed},
        xticklabel style={/pgf/number format/fixed},
    ]
    \addplot3[surf, shader=flat,forget plot] {max(x*10 , y)};
    
    \addplot3[mystyle, defaultmark, scatter, scatter src=explicit symbolic] coordinates {
        (0.382737758,0.0000102787,0) [Memcached4]
        (0.500086233,0.0000073764,0) [TBase1]
        (0.49130154,0.0000075829,0) [Redis1]
        (0.50011367,0.0000035196,0) [TBaseAdaptive]
        (0.181467481,0.0000103843,0) [TBaseZstd]
        (0.161547881,0.0000095431,0) [TBasePbc]
        (0.185388687,0.0000079092,0) [TBaseAEP]
        (0.448874594,0.0000112484,0) [Dragonfly4]
    };
    
    \end{axis}
\end{tikzpicture}
}
\caption{Cost of caching system}
\label{in-memory cost}
\end{figure}
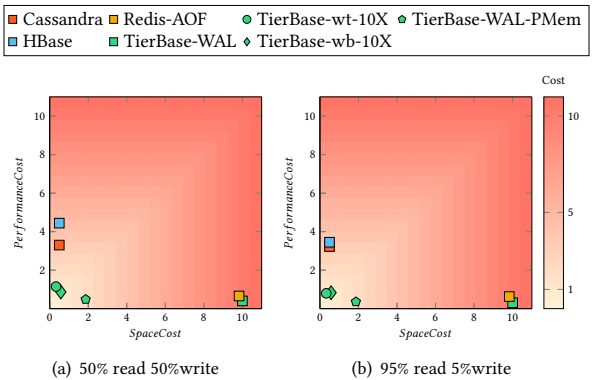
\begin{figure}
\centering

\begin{tikzpicture}[scale = 0.6]
\begin{axis}[
    hide axis,
    scale only axis,
    height=0pt,
    width=0pt,
    xmin=0, xmax=1, ymin=0, ymax=1, 
    legend style={font=\fontsize{12}{10}\selectfont,, legend columns=4, at={(0.5,0.9)}, anchor=north, row sep=0.5pt, column sep=2pt},
    legend image post style={mark size=2.5pt, only marks}, 
    legend cell align={left},
]

\addlegendimage{mark=square*, fill=Giants Orange} \addlegendentry{Cassandra}
\addlegendimage{mark=square*, fill=American Yellow,mark size=4pt} \addlegendentry{Redis-AOF}
\addlegendimage{mark=*, fill=Green2,mark size=4pt} \addlegendentry{TierBase-wt-10X}
\addlegendimage{mark=pentagon*, fill=Green2, draw=black,mark size=4pt} \addlegendentry{TierBase-WAL-PMem}
\addlegendimage{mark=square*, fill=Blue Jeans,mark size=4pt} \addlegendentry{HBase}
\addlegendimage{mark=square*, fill=Green2,mark size=4pt} \addlegendentry{TierBase-WAL}
\addlegendimage{mark=diamond*, fill=Green2,mark size=4pt} \addlegendentry{TierBase-wb-10X}
\end{axis}
\end{tikzpicture}

\par 
\subfigure[50\% read 50\%write]{
\begin{tikzpicture}[scale=0.52]
    \pgfplotsset{
        defaultmark/.style={draw=black, line width=0.25pt},
        mystyle/.style={
            only marks,
            y filter/.code={\pgfmathparse{\pgfmathresult*40000}\pgfmathresult},
            mark size=3.5pt
        },
        scatter/classes={
            Cassandra={mark=square*, fill=Giants Orange},
            hbase={mark=square*, fill=Blue Jeans},
            TBaseAof={mark=square*, fill=Green2},
            RedisAof={mark=square*, fill=American Yellow},
            TBaseWriteThrough10X={mark=*, fill=Green2},
            TBaseWriteBack10X={mark=diamond*, fill=Green2,mark size=5pt},
            TBaseAEP={mark=pentagon*, fill=Green2, draw=black}
        }
    }
    \begin{axis}[      
        height=7cm,
        width=7cm,
        xlabel=$Space Cost$,
        ylabel=$Performance Cost$,
        xticklabel={
            \pgfmathparse{\tick}
            \pgfmathprintnumber{\pgfmathresult}
        },
        yticklabel={
            \pgfmathprintnumber{\tick}
        },
        ytick={2,4,6,8,10},
        xtick={0,2,4,6,8,10},
        view={0}{90},
        domain=0:11,
        y domain=0:11,
        colormap={custom}{
            color(0)=(color1)
            color(1)=(color2)
        },
        samples=20,
        samples y=20,
        scaled y ticks=false,
        scaled x ticks=false,
        yticklabel style={/pgf/number format/fixed},
        xticklabel style={/pgf/number format/fixed},
        legend style={font=\tiny, legend columns=3},
        legend image post style={mark size=2.5pt},
        legend cell align={left},
    ]
    \addplot3[surf, shader=flat,forget plot] {max(x , y)};
    
    \addplot3[mystyle, defaultmark, scatter, scatter src=explicit symbolic] coordinates {
        (0.496651786,0.0000825730,0) [Cassandra]
        (0.496651786,0.0001110618,0) [hbase]
        (10.00172466,0.0000100245,0) [TBaseAof]
        (9.827410515,0.0000164901,0) [RedisAof]
        (0.578500866,0.0000216638748085,0) [TBaseWriteBack10X]
        (0.328457749,0.0000286632,0) [TBaseWriteThrough10X]
        (1.866645295,0.0000120607,0) [TBaseAEP]
    };
    
    \end{axis}
\end{tikzpicture}
}
\subfigure[95\% read 5\%write]{
\begin{tikzpicture}[scale=0.52]
    \pgfplotsset{
        defaultmark/.style={draw=black, line width=0.25pt},
        mystyle/.style={
            only marks,
            y filter/.code={\pgfmathparse{\pgfmathresult*40000}\pgfmathresult},
            mark size=3.5pt
        },
        scatter/classes={
            Cassandra={mark=square*, fill=Giants Orange},
            hbase={mark=square*, fill=Blue Jeans},
            TBaseAof={mark=square*, fill=Green2},
            RedisAof={mark=square*, fill=American Yellow},
            TBaseWriteThrough10X={mark=*, fill=Green2},
            TBaseWriteBack10X={mark=diamond*, fill=Green2,mark size=5pt},
            TBaseAEP={mark=pentagon*, fill=Green2, draw=black}
        }
    }
    \begin{axis}[      
        height=7cm,
        width=7cm,
        xlabel=$Space Cost$,
        ylabel=$Performance Cost$,
        xticklabel={
            \pgfmathparse{\tick}
            \pgfmathprintnumber{\pgfmathresult}
        },
        yticklabel={
            \pgfmathprintnumber{\tick}
        },
        ytick={2,4,6,8,10},
        xtick={0,2,4,6,8,10},
        colorbar,
        colorbar style={
            title=Cost,
            ytick={1,5,10},
            yticklabels={$1$,$5$,$10$}
        },
        view={0}{90},
        domain=0:11,
        y domain=0:11,
        colormap={custom}{
            color(0)=(color1)
            color(1)=(color2)
        },
        samples=20,
        samples y=20,
        legend style={font=\tiny, legend columns=3},
        legend image post style={mark size=2.5pt},
        legend cell align={left},
        scaled y ticks=false,
        scaled x ticks=false,
        yticklabel style={/pgf/number format/fixed},
        xticklabel style={/pgf/number format/fixed},
    ]
    \addplot3[surf, shader=flat,forget plot] {max(x , y)};
    
    \addplot3[mystyle, defaultmark, scatter, scatter src=explicit symbolic] coordinates {
        (0.496651786,0.0000806403,0) [Cassandra]
        (0.496651786,0.0000862645,0) [hbase]
        (10.00172466,0.0000077198,0) [TBaseAof]
        (9.827410515,0.0000157161,0) [RedisAof]
        (0.578500866,0.0000206928530260,0) [TBaseWriteBack10X]
        (0.328457749,0.0000199558,0) [TBaseWriteThrough10X]
        (1.866645295,0.0000088876,0) [TBaseAEP]
        
    };
    
    \end{axis}
\end{tikzpicture}
}
\caption{Cost of database with persistence}
\label{persis cost}
\end{figure}

\subsubsection{Cost analysis for caching systems}
Figure \ref{in-memory cost} presents the cost results for caching systems. The primary cost driver for caching systems is memory storage expenses. Memcached has the lowest storage cost, followed by Dragonfly, while Redis and TierBase without additional features have relatively higher storage costs.

In terms of performance costs, TierBase, Redis, and Memcached exhibit similar low costs in single-thread mode, while Dragonfly shows a higher performance cost. When elastic threading is enabled, TierBase demonstrates improved throughput, leading to a significant reduction in performance costs, nearly half that of single-thread Redis, by efficiently utilizing surplus CPU resources within the containers.

Furthermore, when TierBase employs PMem to extend memory, it achieves a substantial 60\% reduction in storage costs compared to the base configuration, with minimal performance impact. This cost is considerably lower than that of Memcached. Activating compression in TierBase leads to an additional decrease in storage costs.

The results show that TierBase's features can effectively reduce both performance and storage costs. Similar trends are observed across different workload settings, as illustrated in Figure \ref{in-memory cost}(b).

\subsubsection{Cost analysis for databases with persistence}
Figure \ref{persis cost} presents the results for databases with persistence. The traditional key-value store like Cassandra and HBase are observed to have relatively high performance costs while the storage costs are notably low.
For Redis with AOF  and TierBase with WAL, both systems ensure data persistence and adopt a dual-replica strategy for data safety. These approaches result in  lower performance cost but significantly higher storage costs.

TierBase demonstrates a good balance in terms of both performance and storage costs. On the one hand, its inherent characteristics as a caching system enable high throughput. On the other hand, its persistence mechanism does not require storing all data in memory, which contributes to its overall lower costs. It is important to note that under the write-back approach, where data is stored in duplicate copies, the storage cost is higher compared to write-through. However, due to different data update characteristics, the write-back approach exhibits higher throughput in write-intensive scenarios, translating to lower performance costs. This advantage diminishes or even disappears in read-heavy scenarios.

Using PMem for data persistence is a cost-effective choice. Although its space cost is relatively higher compared to write-through and write-back, its performance cost is sufficiently low due to PMem's near-memory speed.

\subsection{Case Study}
\label{case_study}
TierBase is extensively utilized across a wide range of scenarios at Ant Group, with over 3,000 applications leveraging its capabilities. These applications span various use cases, employing hundreds of thousands of CPU cores and several petabytes of memory. Due to space constraints, we will focus on two representative case studies in this paper.

\noindent\textbf{Case 1: User Info Service.}

The User Info Service at Ant Group manages basic user profile data, serving numerous applications through a proprietary SDK with TierBase client. During peak hours on a typical day, this service handles approximately 500,000 updates and 16,000,000 reads per second, indicating a significantly read-heavy workload. Given the service's primary focus on online users, high availability and reliability are of paramount importance.

\subsubsection{Systems comparison}

To assess the cost-effectiveness of various systems, we replayed a real business trace with all databases configured for dual-replica reliability. Figure \ref{case study1} shows that in-memory stores like Redis, Memcached, and Dragonfly have low performance costs but higher storage expenses. TierBase, using compression, halves its original volume, significantly reducing costs compared to Redis. This is particularly advantageous in this read-heavy scenario where performance cost is not primary. The trade-off between performance and storage efficiency demonstrates TierBase's adaptability to specific workload characteristics, optimizing overall cost while maintaining performance efficiency. Activating compression in TierBase yields a 62\% cost reduction compared to TierBase-Raw, showcasing its effectiveness in balancing performance and storage requirements in read-heavy, availability-critical scenarios.

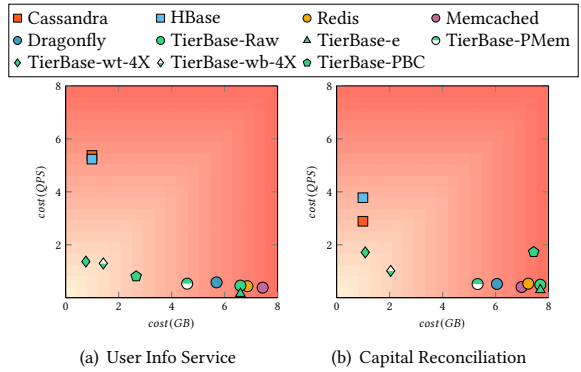
\begin{figure}
\centering

\begin{tikzpicture}[scale = 0.6]
\begin{axis}[
    hide axis,
    scale only axis,
    height=0pt,
    width=0pt,
    xmin=0, xmax=1, ymin=0, ymax=1, 
    legend style={font=\fontsize{12}{10}\selectfont, legend columns=4, at={(0.5,1)}, anchor=north, row sep=0.5pt, column sep=2pt},
    legend image post style={mark size=2.5pt, only marks}, 
    legend cell align={left},
]

\addlegendimage{mark=square*, fill=Giants Orange} \addlegendentry{Cassandra}
\addlegendimage{mark=square*, fill=Blue Jeans,mark size=4pt} \addlegendentry{HBase}
\addlegendimage{mark=*, fill=American Yellow,mark size=4pt} \addlegendentry{Redis}
\addlegendimage{mark=*, fill=Purple, mark size=4pt} \addlegendentry{Memcached}
\addlegendimage{mark=*, fill=Moonstone, draw=black, mark size=4pt} \addlegendentry{Dragonfly}
\addlegendimage{mark=*, fill=Green2, draw=black, mark size=4pt} \addlegendentry{TierBase-Raw}
\addlegendimage{mark=triangle*, fill=Green2,mark size=4pt} \addlegendentry{TierBase-e}
\addlegendimage{mark=halfcircle*, fill=Green2,mark size=4pt} \addlegendentry{TierBase-PMem}
\addlegendimage{mark=diamond*, fill=Green2,mark size=4pt} \addlegendentry{TierBase-wt-4X}
\addlegendimage{mark=halfdiamond*, fill=Green2,mark size=4pt} \addlegendentry{TierBase-wb-4X}
\addlegendimage{mark=pentagon*, fill=Green2, draw=black,mark size=4pt}
\addlegendentry{TierBase-PBC}
\end{axis}
\end{tikzpicture}

\par 
\vspace{-2mm}

\subfigure[User Info Service]{
\label{case study1}
\begin{tikzpicture}[scale = 0.52]
    \pgfplotsset{
        defaultmark/.style={draw=black, line width=0.25pt},
        mystyle/.style={
            only marks,
            y filter/.code={\pgfmathparse{\pgfmathresult*40000}\pgfmathresult},
            mark size=3.5pt,
        },
        scatter/classes={
            Cassandra={mark=square*, fill=Giants Orange},
            Hbase={mark=square*, fill=Blue Jeans},
            Redis={mark=*, fill=American Yellow, mark size=4pt},
            Dragonfly={mark=*, fill=Moonstone, draw=black, mark size=4pt},
            Memcached={mark=*, fill=Purple, mark size=4pt},
            TierBase-raw={mark=*, fill=Green2, draw=black, mark size=4pt},
            TierBase-pmem={mark=halfcircle*, fill=Green2,mark size=4pt},
            TierBase-wb={mark=halfdiamond*, fill=Green2,mark size=4pt},
            TierBase-wt={mark=diamond*, fill=Green2,mark size=4pt},
            TierBase-compression={mark=pentagon*, fill=Green2, draw=black,mark size=4pt},
            TierBase-e={mark=triangle*, fill=Green2,mark size=4pt}
        }
    }
    \begin{axis}[      
        height=7cm,
        width=7cm,
        xlabel=$cost(GB)$,
        ylabel=$cost(QPS)$,
        xticklabel={
            \pgfmathparse{\tick}
            \pgfmathprintnumber{\pgfmathresult}
        },
        yticklabel={
            \pgfmathprintnumber{\tick}
        },
        ytick={2,4,6,8},
        xtick={0,2,4,6,8},
        view={0}{90},
        domain=0:8,
        y domain=0:8,
        colormap={custom}{
            color(0)=(color1)
            color(1)=(color2)
        },
        samples=20,
        samples y=20,
        scaled y ticks=false,
        scaled x ticks=false,
        yticklabel style={/pgf/number format/fixed},
        xticklabel style={/pgf/number format/fixed},
        legend style={font=\tiny, legend columns=3,
        row sep=0.5pt,
        column sep=2pt
        },
        legend cell align={left},
        legend image post style={mark size=2.5pt},
    ]
    \addplot3[surf, shader=flat,forget plot] {max(x, y)};
    
    \addplot3[mystyle, defaultmark, scatter, scatter src=explicit symbolic] coordinates {
        (0.993303571,0.000134282,0) [Cassandra]
        (0.993303571,0.000130839,0) [Hbase]
        (2*3.430351257,0.000010850,0) [Redis]
        (2*2.847989082,0.000014454,0) [Dragonfly]
        (2*3.718143463,0.000009500,0) [Memcached]
        (2*3.297821045,0.000011431,0) [TierBase-raw]
        (0.762985343,0.000034125,0) [TierBase-wt]
        (1.422549552,0.000032339,0) [TierBase-wb]
        (2*1.331118,0.00002016,0)[TierBase-compression]
        (2*2.293272972,0.000013222,0) [TierBase-pmem]
        (2*3.297821045,0.000003850,0) [TierBase-e]

    };
    
    \end{axis}
\end{tikzpicture}
}
\subfigure[Capital Reconciliation]{
\label{case study2}
\begin{tikzpicture}[scale = 0.52]
    \pgfplotsset{
        defaultmark/.style={draw=black, line width=0.25pt},
        mystyle/.style={
            only marks,
            y filter/.code={\pgfmathparse{\pgfmathresult*40000}\pgfmathresult},
            mark size=3.5pt,
        },
        scatter/classes={
            Cassandra={mark=square*, fill=Giants Orange},
            Hbase={mark=square*, fill=Blue Jeans},
            Redis={mark=*, fill=American Yellow, mark size=4pt},
            Dragonfly={mark=*, fill=Moonstone, draw=black ,mark size=4pt},
            Memcached={mark=*, fill=Purple, mark size=4pt},
            TierBase-raw={mark=*, fill=Green2, draw=black, mark size=4pt},
            TierBase-pmem={mark=halfcircle*, fill=Green2,mark size=4pt},
            TierBase-wb={mark=halfdiamond*, fill=Green2,mark size=4pt},
            TierBase-compression={mark=pentagon*, fill=Green2, draw=black,mark size=4pt},
            TierBase-wt={mark=diamond*, fill=Green2,mark size=4pt},
            TierBase-e={mark=triangle*, fill=Green2,mark size=4pt}
        }
    }
    \begin{axis}[      
        height=7cm,
        width=7cm,
        xlabel=$cost(GB)$,
        ylabel=$cost(QPS)$,
        xticklabel={
            \pgfmathparse{\tick}
            \pgfmathprintnumber{\pgfmathresult}
        },
        yticklabel={
            \pgfmathprintnumber{\tick}
        },
        ytick={2,4,6,8},
        xtick={0,2,4,6,8},
        view={0}{90},
        domain=0:8,
        y domain=0:8,
        colormap={custom}{
            color(0)=(color1)
            color(1)=(color2)
        },
        samples=20,
        samples y=20,
        scaled y ticks=false,
        scaled x ticks=false,
        yticklabel style={/pgf/number format/fixed},
        xticklabel style={/pgf/number format/fixed},
        legend style={font=\tiny, legend columns=3,
        row sep=0.5pt,
        column sep=2pt
        },
        legend cell align={left},
        legend image post style={mark size=2.5pt},
    ]
    \addplot3[surf, shader=flat,forget plot] {max(x, y)};
    
    \addplot3[mystyle, defaultmark, scatter, scatter src=explicit symbolic] coordinates {
        (2*3.491230011,0.000010061,0) [Memcached]
        (2*3.61715126,0.000013230,0) [Redis]
        (2*3.025972366,0.000012971,0) [Dragonfly]
        (0.993303571,0.000072177,0) [Cassandra]
        (0.993303571,0.000094415,0) [Hbase]
        (2*3.844118118,0.000012241,0) [TierBase-raw]
        (1.081582783,0.000042850,0) [TierBase-wt]
        (2.042612312,0.000025299,0) [TierBase-wb]
        (2*3.719650269,0.000043146,0) [TierBase-compression]
        (2*2.661312543,0.000012898,0) [TierBase-pmem]
        (2*3.844118118,0.000007357,0) [TierBase-e]
    };
    
    \end{axis}
\end{tikzpicture}
}
\caption{Cost of case study}
\end{figure}

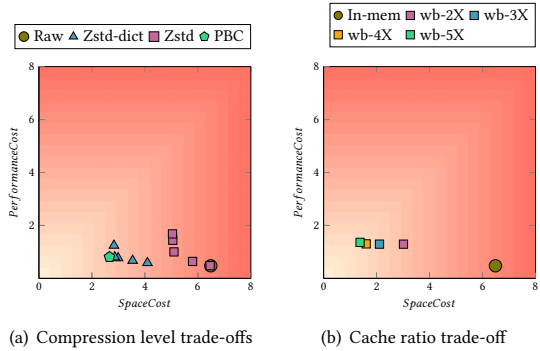
\begin{figure}
\centering





\subfigure[Compression level trade-offs]{
\label{compression_tradeoff}
\begin{tikzpicture}[scale = 0.52]
    \pgfplotsset{
        defaultmark/.style={draw=black, line width=0.25pt},
        mystyle/.style={
            only marks,
            y filter/.code={\pgfmathparse{\pgfmathresult*40000}\pgfmathresult},
            mark size=3.5pt,
        },
        scatter/classes={
            Full={mark=*, fill=Olive, mark size=4.5pt},
            L22={mark=triangle*, fill=Moonstone, draw=black, mark size=4pt},
            L1nd={mark=square*, fill=Purple, mark size=3pt},
            PBC={mark=pentagon*, fill=Green2, draw=black,mark size=4pt}
        }
    }
    \begin{axis}[      
        height=7cm,
        width=7cm,
        xlabel=$Space Cost$,
        ylabel=$Performance Cost$,
        xticklabel={
            \pgfmathparse{\tick}
            \pgfmathprintnumber{\pgfmathresult}
        },
        yticklabel={
            \pgfmathprintnumber{\tick}
        },
        ytick={2,4,6,8},
        xtick={0,2,4,6,8},
        view={0}{90},
        domain=0:8,
        y domain=0:8,
        colormap={custom}{
            color(0)=(color1)
            color(1)=(color2)
        },
        samples=20,
        samples y=20,
        scaled y ticks=false,
        scaled x ticks=false,
        yticklabel style={/pgf/number format/fixed},
        xticklabel style={/pgf/number format/fixed},
        legend style={font=\fontsize{12}{10}\selectfont,
        legend columns=4,
        at={(1,1.2)},
        row sep=0.5pt,
        column sep=2pt
        },
        legend cell align={left},
        legend image post style={mark size=3.5pt},
    ]
    \addplot3[surf, shader=flat,forget plot] {max(x, y)};
    
    \addplot3[mystyle, defaultmark, scatter, scatter src=explicit symbolic] coordinates {
        (2*3.243796 ,0.00001189,0) [Full]
        (2*2.048442,0.00001465,0) [L22] 
        (2*1.768524,0.00001683,0) [L22] 
        (2*1.495251,0.00001918,0) [L22] 
        (2*1.431094,0.00002005,0) [L22] 
        (2*1.415869,0.00003107,0) [L22]
        (2*3.231505,0.000012,0) [L1nd] 
        (2*2.9,0.000016,0) [L1nd]
        (2*2.548791,0.000025,0) [L1nd] 
        (2*2.526948,0.000036,0) [L1nd] 
        (2*2.521176,0.000042,0) [L1nd] 
        (2*1.331118,0.00002016,0) [PBC]
    };
    
    \legend{Raw,Zstd-dict,Zstd,PBC}
    \end{axis}
\end{tikzpicture}
}
\hspace{1mm}
\subfigure[Cache ratio trade-off ]{
\label{Tier_tradeoff}
\begin{tikzpicture}[scale = 0.52]
    \pgfplotsset{
        defaultmark/.style={draw=black, line width=0.25pt},
        mystyle/.style={
            only marks,
            y filter/.code={\pgfmathparse{\pgfmathresult*40000}\pgfmathresult},
            mark size=3.5pt,
        },
        scatter/classes={
            Full={mark=*, fill=Olive, mark size=4.5pt},
            WB2X={mark=square*, fill=Purple, mark size=3pt},
            WB3X={mark=square*, fill=Moonstone, draw=black, mark size=3pt},
            WB4X={mark=square*, fill=American Yellow, mark size=3pt},
            WB5X={mark=square*, fill=Green2, draw=black, mark size=3pt}
        }
    }
    \begin{axis}[      
        height=7cm,
        width=7cm,
        xlabel=$Space Cost$,
        ylabel=$Performance Cost$,
        xticklabel={
            \pgfmathparse{\tick}
            \pgfmathprintnumber{\pgfmathresult}
        },
        yticklabel={
            \pgfmathprintnumber{\tick}
        },
        ytick={2,4,6,8},
        xtick={0,2,4,6,8},
        view={0}{90},
        domain=0:8,
        y domain=0:8,
        colormap={custom}{
            color(0)=(color1)
            color(1)=(color2)
        },
        samples=20,
        samples y=20,
        scaled y ticks=false,
        scaled x ticks=false,
        yticklabel style={/pgf/number format/fixed},
        xticklabel style={/pgf/number format/fixed},
        legend style={
        font=\fontsize{12}{10}\selectfont,
        legend columns=3,
        at={(1,1.3)},
        legend columns=3,
        row sep=0.5pt,
        column sep=2pt
        },
        legend cell align={left},
        legend image post style={mark size=3pt},
    ]
    \addplot3[surf, shader=flat,forget plot] {max(x, y)};
    
    \addplot3[mystyle, defaultmark, scatter, scatter src=explicit symbolic] coordinates {
        (1.607845231,0.00003258,0) [WB4X]
        (2.110432121,0.00003234,0) [WB3X]
        (3.015521338,0.00003231,0) [WB2X]
        (6.487591315,0.00001189,0) [Full]
        (1.376655262,0.0000340,0) [WB5X]
    };
    
    \legend{In-mem,wb-2X,wb-3X,wb-4X,wb-5X}
    \end{axis}
\end{tikzpicture}
}
\caption{Space-Performance Cost Trade-offs}
\label{Trade-off}
\end{figure}

\subsubsection{Space-performance cost trade-offs}

We demonstrate the trade-off using our proposed cost model under use case 1. Figure \ref{Trade-off} shows the workload's space cost significantly exceeds its performance cost. We employ compression techniques, sacrificing some performance to save substantial space. We tested Zstd compression levels -50, -10, 1, 15, and 22, both with and without a dictionary. As shown in Figure \ref{Trade-off}(a), higher compression levels increase space savings but have an upper bound, beyond which compression ratio gains become marginal while performance costs grow considerably. Practically, we may select compression level 1 for better performance tolerance. Additionally, pre-trained compression yields more substantial cost savings compared to compression without pre-training.

We also evaluate the cost-effectiveness of TierBase with write-back policy using four cache ratios ranging from 2X to 5X. As shown in Figure \ref{Trade-off}(b), Higher cache ratios result in lower space costs but higher performance costs. 

The result reveals that a cache ratio of 5X approximately achieves the optimal balance between performance and storage costs as predicted by our model. The results validate our cost model for accurately guiding cost optimization, confirming that the achieved effects align with expectations.




\subsubsection{Break-even interval}

Furthermore, we calculate several sets of break-even intervals between the fast and slow storage configurations on TierBase based on the analyses in Section \ref{interval}. As shown in Table \ref{interval_table}, if the average access interval for a key in the workload is less than 98 seconds, the default TierBase is the most cost-effective choice. For access intervals between 98 and 264 seconds, TierBase with PMem mode is recommended. When the average access interval exceeds 264 seconds, employing compression becomes the optimal solution. By collecting the average access interval for a key in the real workload, we observe that it exceeds 1018 seconds. Consequently, TierBase is employed as a single-layer caching system, leveraging pre-trained compression (PBC) to optimize memory usage. This approach achieves a 25\% compression rate for values and realizes cost savings of 50\%, which is significant considering the hundreds of thousands of CPU cores utilized in this case.


\begin{table}[htbp]
\centering
\caption{Break-even interval between different configurations.}
\begin{tabular}{ccc}
\toprule
Fast Storage & Slow Storage & Time Interval(s) \\
\midrule
Raw          & PMem         & 98 \\
Raw          & Compression(PBC) & 184 \\
PMem         & Compression(PBC) & 264 \\
\bottomrule
\end{tabular}
\label{interval_table}
\end{table}

Although write-through caching could potentially be applicable in this scenario and achieve a 60\% cost reduction, the decision to prioritize compression over write-through caching is driven by the client's stringent requirements for low latency and high stability when serving online requests. TierBase's adaptability allows it to configure for aligning with the specific needs and priorities of each client, ensuring an optimal balance between cost efficiency and performance in real-world scenarios.




\noindent\textbf{Case 2: Capital Reconciliation.}
TierBase is also deployed in the capital reconciliation business at Ant Group. As a risk control scenario focused on financial auditing and verification, the capital reconciliation business is particularly sensitive to costs. During peak shopping seasons, the overall QPS for capital reconciliation can reach tens of millions. TierBase's write-through and write-back caching strategies are selectively employed based on different scenarios. For this case study, we choose one of the main scenarios where the read and write operations are close to a 1:1 ratio. In this scenario, data from different channels is written into TierBase and then read out by the reconciliation system for verification.


Figure \ref{case study2} illustrates the cost breakdown for this scenario. Disk-based key-value stores like HBase and Cassandra exhibit low space and performance costs. When TierBase is configured with write-through, performance costs are lowered by 35\% compared to Cassandra. In high-throughput scenarios, enabling write-back mode further enhances performance. With the same configuration, TierBase can achieve 2.6x the performance of Cassandra. Overall, TierBase reduces costs by at least 37\% compared to both Cassandra and HBase. Additionally, it cuts costs by 70\% compared to TierBase default configuration. 

The observations in the capital reconciliation case study reveal that recent data is frequently accessed in the cache, while long-term data is occasionally retrieved. Online statistics shows that TierBase with write-through mode achieves a cache hit rate of approximately 80\%, with only 1\% of the hottest data stored in the cache tier. This demonstrates the effectiveness of TierBase's cache-storage disaggregation in significantly reducing costs for workloads with temporal access skewness.


\
\newline
\indent

\vspace{-7mm} 
\section{Related Work}  

\subsection{Key-Value Stores}


Key-value stores, a type of NoSQL databases, play a crucial role in Internet applications. Diverging from traditional relational databases, they provide fast, scalable, and efficient data access essential for a wide range of online applications.

Memcached \cite{memcached} is a distributed caching system enhancing web application performance by minimizing database load and optimizing memory across servers. 
Redis \cite{redis}, unlike Memcached's focus on caching, is a multifunctional in-memory database with a wide variety of data types, extended functionality, and persistence, suitable for complex, high-performance applications.
KeyDB \cite{keydb} enhances Redis with multi-threading. Dragonfly \cite{dragonflydb} offers compatibility with Redis and Memcached, using multi-threaded, shared-nothing architecture. Etcd \cite{etcd} is crucial for Kubernetes, providing consistent configuration across clusters. EVCached \cite{evcache}, ElastiCache \cite{elasticache}, and Azure Cache for Redis \cite{azurecache} are managed caching solutions, improving cloud applications by reducing database loads and enabling quick data access.

\subsection{Cost Optimization}

\subsubsection{Cost model} Constructing a reasonable cost model is crucial for the optimization of database costs. Total Cost of Ownership (TCO)\cite{tco} refers to the sum of all related costs throughout the entire lifecycle of an item or service, including purchasing, operating, maintaining, and disposing of it. This concept is widely applied across various fields. Some cost models \cite{10.5555/1287369.1287387, 10.1007/s00778-021-00660-x, 9835426}  primarily focus on estimating or optimizing query execution time or overall system performance.  They consider factors like memory access patterns, cache behavior, and data structure efficiency to predict or improve query processing speed. Other cost models \cite{10.14778/3551793.3551837, 10.1145/2213836.2213844} explicitly consider financial costs, such as the cost of operating cloud resources or the cost differences between different types of storage medium. 
However, these methods often struggle to adequately represent the complex relationships between configurations and costs, especially when considering diverse workload characteristics. As a result, they are not directly applicable to our study's objective.

\subsubsection{Optimization strategy} Contrasting with TierBase, which concentrates on a comprehensive cost model that balances performance and storage expenses, some studies \cite{anna, 10.1145/1989323.1989357, 8741093, 10597997}  achieve more effective resource allocation or configuration parameter adjustments by monitoring workload variations within databases, thereby enhancing resource utilization and reducing costs. Cosine \cite{cosine} actively tailors its storage engine architecture to optimize costs, adapting to workload demands, cloud budgets, and specific performance objectives.

\subsubsection{Compression}
In database technology, Lempel-Ziv(LZ) algorithms are widely used for data compression to improve storage and transfer efficiency. TiDB \cite{tidb} and Hadoop \cite{hadoop} use LZ4, while LevelDB \cite{leveldb} uses Snappy \cite{snappy}. Zstandard (Zstd) \cite{zstd} is chosen by Facebook's RocksDB \cite{rocksdb} and Redshift \cite{redshift} for its efficiency. Apache Cassandra \cite{Cassandra} also uses compression to reduce disk space use. For memory efficiency, SlimCache \cite{SlimCache} and zExpander \cite{zExpander} compress data in caches, and COLLATE \cite{COLLATE} applies lightweight compression in in-memory databases.

\subsubsection{Thread management}
Thread management techniques play a pivotal role in DBMSs by optimizing the handling of concurrent client requests and maximizing resource utilization.
In systems like \cite{rocksdb,riak,oracle,mysql}, dynamic adjustment of thread pools is utilized to efficiently handle fluctuating workloads. On the other hand, databases such as \cite{memcached,redis} maintain a fixed thread count, ensuring consistent resource allocation. \cite{tarantool,mongodb, coroutines1,coroutines2}, use coroutines in certain scenarios for managing concurrent requests, balancing thread overhead with concurrency. \cite{numa_aware_thread} discusses thread management in the NUMA-aware task scheduling. \cite{aerospike} merges these approaches, applying dynamic thread pools for general requests and fixed threads for specific tasks.

\subsubsection{Persistent memory}
Non-volatile memory, also known as persistent memory is a type of memory that keeps its data even when the power is turned off. 
Many current efforts are integrating PMem into storage systems \cite{Viper,Tair-PMem,ChameleonDB,ByteDance_pmem,NV-SQL,lsmnvm,zen} to enhance performance. However, PMem has its own set of challenges and issues. Some studies \cite{costmodel_pmem,PerMA-Bench,manage_nvmsystem} are exploring how systems should navigate between PMem and DRAM, addressing concerns about optimal choice and management.


\section{Conclusion}



This paper introduces TierBase, a distributed key-value store developed by Ant Group to address the challenges of managing large-scale online data serving systems. TierBase employs a tiered storage architecture and incorporates cost-effective optimizations. We propose a space-performance cost model to guide the selection of optimal storage configurations for diverse workloads. Extensive evaluations using both synthetic benchmarks and real-world workloads demonstrate TierBase's superior cost-effectiveness compared to existing solutions.

\section*{Acknowledgment}

We thank everyone who contributed to the design and development of the TierBase system and its cost model. This work was supported by Ant Group Research Fund.


\bibliographystyle{plain} 
\bibliography{references}

\end{document}